\newtheorem{proposition}{Proposition}
\newtheorem{remark}{Remark}
\newtheorem{definition}{Definition}
\newtheorem{theorem}{Theorem}
\newtheorem{example}{Example}
\title{A Relation Between Network Computation and Functional Index Coding Problems}
\begin{document}

\author{
\authorblockN{Anindya Gupta and B. Sundar Rajan\\}
\authorblockA{Department of Electrical Communication Engineering, Indian Institute of Science, Bengaluru 560012, KA, India\\Email:~\{anindya.g, bsrajan\}@ece.iisc.ernet.in}
}

\maketitle
\begin{abstract}
In contrast to the network coding problem wherein the sinks in a network demand subsets of the source messages, in a network computation problem the sinks demand functions of the source messages. 
Similarly, in the functional index coding problem, the side information and demands of the clients include disjoint sets of functions of the information messages held by the transmitter instead of disjoint subsets of the messages, as is the case in the conventional index coding problem. It is known that any network coding problem can be transformed into an index coding problem and vice versa.
In this work, we establish a similar relationship between network computation problems and a class of functional index coding problems, viz., those in which only the demands of the clients include functions of messages. We show that any network computation problem can be converted into a functional index coding problem wherein some clients demand functions of messages and vice versa. We prove that a solution for a network computation problem exists if and only if a functional index code (of a specific length determined by the network computation problem) for a suitably constructed functional index coding problem exists. And, that a functional index coding problem admits a solution of a specified length if and only if a suitably constructed network computation problem admits a solution.
\end{abstract} 

\begin{IEEEkeywords}
Network Coding, Network Computation, In-network Computation, Index Coding, Functional Index Coding.
\end{IEEEkeywords}
\section{Introduction}

Conventional communication networks, like the Internet, ensure transfer of information generated at some nodes to others. It is known that network coding affords throughput gain over routing in such networks \cite{Ahls,Yeung,Alg,Fro,Ho,RLNC,Insuff}, and given a network and the demanded source messages at each sink, the network coding problem is to design a network code that maximizes the rate of information transfer from the source nodes to the sinks. But in some networks, like a sensor networks for environmental monitoring, nodes may be interested not in the messages generated by some other nodes but in one or more functions of these messages. Designing a network code that maximizes the frequency of target functions computation, called the \textit{computing capacity}, per network use at the sinks is known as the \textit{network computing} problem \cite{Appu}. This subsumes the network coding problem as a special case. Environmental monitoring in an industrial unit is an application of network computation where relevant parameter may include temperature and level of exhaust gases which may assist in preventing fire and poisoning due to toxic gases respectively. 

A simple way to perform network computation is to communicate all the messages relevant to the function required at each sink using either network coding or routing. This is not only highly inefficient in terms of bandwidth usage and power consumption but also undesirable in certain settings. For example, in an election, who voted whom is to be kept confidential but the sum total of votes received by each candidate is to be publicized. An efficient way is that function computation be performed \textit{in-network}, i.e., in a distributed manner. The intermediate nodes on the paths between the sources and the sinks perform network coding and communicate coded messages such that the sinks may compute their desired functions without having to know the value of the arguments. For example, if the maximum temperature is of interest in a sensor network, then each sensor node need only communicate the maximum of its own temperature reading and the message it receives on its incoming edges, which may be raw measurements or maximum temperatures of some sets of nodes, and communicate it to its neighbor on the path towards the sink node; the sink can then take maximum of the messages it receives to obtain the maximum temperature reading among all the nodes. Thus, network computation is performed in a distributed fashion without having to communicate measurement of each sensor node to the sink.

In \cite{GiriKr}, bounds on rate of computing symmetric functions (invariant to argument permutations), like minimum, maximum, mean, median and mode, of data collected by sensors in a wireless sensor network at a sink node were presented. The notion of min-cut bound for the network coding problem \cite{Yeung} was extended to the function computation problem in a directed acyclic network with multiple sources and one sink in \cite{Appu}. The case of directed acyclic network with multiple sources, multiple sinks and each sink demanding the sum of source messages was studied in \cite{Dey}; such a network is called a sum-network. Relation between linear solvability of multiple-unicast networks and sum-networks was established. Furthermore, insufficiency of scalar and vector linear network codes to achieve computing capacity for sum-networks was shown. Coding schemes for computation of arbitrary functions in directed acyclic network with multiple sources, multiple sinks and each sink demanding a function of source messages were presented in \cite{Dey2}. In \cite{Appu2}, routing capacity, linear coding capacity and nonlinear coding capacity for function computation in a multiple source single sink directed acyclic network were compared and depending upon the demanded functions and alphabet (field or ring), advantage of linear network coding over routing and nonlinear network coding over linear network coding was shown.

The index coding problem was introduced in \cite{BK} and finds potential commercial application in dissemination of popular multimedia content in wireless ad hoc networks and using digital video broadcast. An instance of the index coding problem comprises a source, which generates a finite number of messages, and a set of receivers. Each receiver knows a subset of messages, called the \textit{Has-set}, and demands another subset of messages, called the \textit{Want-set}. The objective is to make a minimum number of encoded transmissions over a noiseless broadcast channel such that the demands of all the clients are satisfied upon reception of the same. Graph theory has been extensively used to study this problem \cite{BBJK,LubStav,BCSI,OngHo}. The case where the \textit{Has-sets} includes linear combination of messages was studied in \cite{BCCSI} and \cite{ICCSI}; such a scenario may arise if some clients fail to receive some coded transmissions, possibly due to power outage, bad weather, intermittent signal reception, etc., and the transmitter will have to compute a new index code after every transmissions taking into account the updated caches, which may now include coded messages. 

The \textit{functional index coding problem} was recently proposed in \cite{FICP} as a generalization of the conventional index coding problem. In a functional index coding problem, the \textit{Has-} and \textit{Want-sets} of users may contain functions of messages rather than only a subset of messages as is the case in a conventional index coding problem. Thus, the conventional index coding problem and the scenarios studied in \cite{BCCSI} and \cite{ICCSI} are special cases of the functional index coding problem. In \cite{FICP}, bounds on codebook size were obtained and a graph coloring approach to obtaining a \textit{functional index code} was given. 


In \cite{ICRel}, it was shown that any network coding problem can be reduced to an index coding problem and that a network coding problem admits a linear solution if and only if a linear index code of a specific length (determined by the network coding problem) exists for the corresponding index coding problem. This relationship was extended to include nonlinear codes in \cite{ICRel2}. In \cite{ICRel2}, a method, similar to that given in \cite{ICRel}, to convert a network coding problem to an index coding problem was given. It was shown that a network code for the former problem exists if and only if an index code of a specific length (determined by the network coding problem) exists; methods to convert an index code to a network code and vice versa were also given.



\subsection{Contributions and Organization}


In this paper, we explore the relationship of network computation and a class of functional index coding problems. The contributions of this paper are as follows:
\begin{enumerate}
\item In Section~\ref{sec_fnc2fic}, we give a method to construct a functional index coding problem from a given network computation problem (Definition~\ref{nc_2_fic}). In the resulting functional index coding problem, only the \textit{Want-sets} of the clients include functions of messages and the \textit{Has-sets} are all subsets of the message set (like in the conventional index coding problem). 
\item We show that a network code to perform in-network computation of the functions demanded by the sinks exists if and only if the corresponding functional index coding problem admits a solution of a specific length, which is determined by the original network (Theorem~\ref{thm_1} in Section~\ref{sec_fnc2fic}). We give a method to convert a functional index code for the obtained functional index coding problem into a network code for the original network computation problem and vice versa. Thus, in order to obtain a network code for a given network computation problem, we convert it into a functional index coding problem, construct a functional index code, and then convert the functional index code to a network code for the given network computation problem.
\item In Section~\ref{sec_fic2fnc}, we show that any functional index coding problem with only \textit{Want-sets} containing functions of messages can be converted as a network computation problem (Definition~\ref{fic_2_fnc}). We prove that the functional index coding problem admits a solution of a specific length if and only if the network computation problem obtained from it admits a solution (Proposition~\ref{prop_2}). 
\end{enumerate}
In Section~\ref{sec_prelims}, relevant preliminaries of network computation and functional index coding problems are given. We conclude the paper with a summary of work presented and possible directions of future work in Section~\ref{sec_disc}. 


\section{Network Model} \label{sec_prelims}
A brief overview of network computation and functional index coding problems are presented in this section. A $q$-ary finite field is denoted by $\mathbb{F}_q$ and the set $\{1,2,\ldots,n\}$ is denoted by $[n]$, for some positive integer $n$. All vectors are row vectors. A realization of a random variable $Z$ taking value from $\mathbb{F}_q^n$, for some positive integer $n$, is denoted by $z$. A set containing one element $s$ is denoted by $\{s\}$.

\subsection{Network Computation}
A network is represented by a finite directed acyclic graph $\mathcal{N}=(V,\mathcal{E})$, where $V$ is the set of nodes and $\mathcal{E}=E\cup \tilde{E}$ is the set of directed error-free links (edges), where the edges in $E$ correspond to the links between the nodes in the network and the edges in $\tilde{E}$ correspond to the source messages generated in the network. The sets of incoming and outgoing links of a node $w\in V$ are denoted by $In(w)$ and $Out(w)$ respectively. For an edge $e=(u,v)\in E$ from a node $u$ to $v$, $u$ and $v$ are called, respectively, its tail and head and $In(e)=In(u)$, i.e., $In(e)$ is the set of edges which terminate at the node at which $e$ originates. The network may have multiple source nodes and each may generate multiple messages. The source messages are represented by tailless edges $\tilde{e}_k\in \tilde{E}$ that terminate at a source node. The total number of messages generated in the network is $K=|\tilde{E}|$ and are denoted by random variables $X_1,X_2,\ldots,X_K$, where, for every $k\in [K]$, $X_k$ is uniformly distributed over $\mathbb{F}_q^{n_k}$, for some positive integer $n_k$. Let $N_K=n_1+n_2+\ldots+n_K$. Let $X=(X_1,X_2,\ldots,X_K)$. A realization of $X$ is denoted by $\mathbf{x}=(x_1,x_2,\ldots,x_K)$, where $x_k\in \mathbb{F}_q^{n_k}$ for all $k\in [K]$ and $\mathbf{x} \in \mathbb{F}_q^{N_K}$. The capacity of a link $e\in E$ between nodes in the network is $n_e$ and $Y_e$ is the associated random variable, i.e., $y_e\in \mathbb{F}_q^{n_e}$. Note that for a source edge $\tilde{e}_k\in \tilde{E}$, the capacity is $n_k$ and the associated random variable is $Y_{\tilde{e}_k}=X_k$ respectively. The set of sink nodes is denoted by ${T}$. 
A sink node $t$ demands $n_t$ functions $G_t=\{g_{t,1}(X),g_{t,2}(X),\ldots,g_{t,n_t}(X)\}$, where $g_{t,i}:\mathbb{F}_q^{N_K}\rightarrow \mathbb{F}_q$ for all $i\in [n_t]$. Let $G_t(X)=(g_{t,1}(X),g_{t,2}(X),\ldots,g_{t,n_t}(X))$. For a realization $\mathbf{x}$ of $X$, $G_t(\mathbf{x})=(g_{t,1}(\mathbf{x}),g_{t,2}(\mathbf{x}),\ldots,g_{t,n_t}(\mathbf{x}))$ is the vector of function values sink $t$ wishes to compute. The demands are denoted by headless edges (not included in $\mathcal{E}$) originating at sink nodes. 

A network computation problem $\mathcal{F}(\mathcal{N}(V,\mathcal{E}),X,\{G_t:t\in {T}\})$ is specified by the underlying network, the source messages, and the demands of each sink.

A network code $\{f_e:e\in E\}\cup \{D_t:t\in {T}\}$ for a network computation problem $\mathcal{F}(\mathcal{N}(V,\mathcal{E}),X,\{G_t:t\in {T}\})$ is an assignment of a \textit{local encoding kernels} $f_e$ to each edge $e\in E$ and a decoding function $D_t$ to each sink $t\in {T}$. For any $e\in E$, $f_e$ takes in $(Y_{e'})_{e'\in In(e)}$ as input argument and outputs $Y_e$, i.e.,
\begin{align*}
f_e:(Y_{e'})_{e'\in In(e)}\longmapsto Y_e.
\end{align*}
For any sink $t\in {T}$, the decoding map $D_t$ takes in $(Y_{e'})_{e'\in In(t)})$ as input argument and outputs $G_t(X)$, i.e.,
\begin{align}\label{eq_dec_y}
D_t:(Y_{e'})_{e'\in In(t)}\longmapsto G_t(X)
\end{align}
for every realization $\mathbf{x}$ of $X$, $\mathbf{x}\in \mathbb{F}_q^{n_1+n_2+\ldots+n_K}$.

The \textit{global encoding kernels} $\{F_e:e\in \mathcal{E}\}$ and the decoding functions $\{D_t:t\in {T}\}$ give an alternate description of a network code. For any edge $e$, $F_e$ maps $X$ to $Y_e$ (and thus, the distribution of $Y_e$ depends upon the network code). Given the local encoding kernels, the global encoding kernels for each edge can be defined by induction on an ancestral ordering of the edges in the graph as follows \cite[Sec.~II]{ICRel2}. For every tailless edge $\tilde{e}_k\in\tilde{E}$ denoting the source message $X_k$, let $F_{\tilde{e}_k}(X)=X_k$ be the global encoding kernel. Then, for any edge $e\in E$, 
\begin{align} \label{eq_f_2_F}
F_e(X)=f_e\left((F_{e'}(X))_{e'\in In(e)}\right).
\end{align}
Also, by \eqref{eq_dec_y},
\begin{align} \label{eq_dec_F}
D_t\left((F_{e'}(X))_{e'\in In(t)}\right)=G_t(X),
\end{align}
for every realization $\mathbf{x}$ of $X$, $\mathbf{x}\in \mathbb{F}_q^{N_K}$.

\subsection{Functional Index Coding}
An instance $\mathcal{I}(Z,\mathcal{R})$ of a functional index coding problem comprises
\begin{enumerate}
\item a transmitter equipped with the message vector $Z=(Z_1,Z_2,\ldots,Z_K)$, where, for every $k\in [K]$, $Z_k$ is uniformly distribute over $\mathbb{F}_q^{n_k}$ for some positive integer $n_k$ and
\item a set of clients or receivers, $\mathcal{R}=\{R_1,R_2,\ldots,R_{|\mathcal{R}|}\}$, where $R_i=(H_i,W_i)$ for all $R_i\in\mathcal{R}$. For any receiver $R_i$, $H_i=\{ h_{i,1}(Z),h_{i,2}(Z),\ldots ,h_{i,|H_i|}(Z)\}$ and $W_i=\{w_{i,1}(Z),w_{i,2}(Z),\ldots ,w_{i,|W_i|}(Z)\}$ are the \textit{Has-} and \textit{Want-sets} respectively, where $h_{i,j},w_{i,l}:\mathbb{F}_q^{N_K} \rightarrow \mathbb{F}_q$ for $1\leqslant j\leqslant |H_i|$ and $1\leqslant l\leqslant |W_i|$ $(N_K=n_1+\ldots+n_K)$.
\end{enumerate}

Let $H_i(Z)=(h_{i,1}(Z),h_{i,2}(Z),\ldots ,h_{i,|H_i|}(Z))$ and $W_i(Z)=(w_{i,1}(Z),w_{i,2}(Z),\ldots ,w_{i,|W_i|}(Z))$. For a realization $\mathbf{z}=(z_1,z_2,\ldots,z_K)$ of $Z$ at the transmitter, $H_i(\mathbf{z})=(h_{i,1}(\mathbf{z}),h_{i,2}(\mathbf{z}),\ldots ,h_{i,|H_i|}(\mathbf{z}))$ is the \textit{Has-value} known to $R_i$ and $W_i(\mathbf{z})=(w_{i,1}(\mathbf{z}),w_{i,2}(\mathbf{z}),\ldots ,w_{i,|W_i|}(\mathbf{z}))$ as the \textit{Want-value} $R_i$ is interested in computing.

The conventional index coding problem correspond to the case in which all the \textit{Has-} and \textit{Want-set} are subsets of $Z$. The case in which the \textit{Has-sets} of the clients include linear combinations of messages and the \textit{Want-sets} are subsets of $Z$ was considered in \cite{BCCSI} and \cite{ICCSI}.

A functional index code for a given $\mathcal{I}(Z,\mathcal{R})$ comprises an encoding map
\begin{align*}
\mathcal{M}:\mathbb{F}_q^{N_K}\longrightarrow \mathbb{F}_q^{\ell}
\end{align*}
and a decoding function $\hat{D}_{R_i}$ for every receiver $R_i\in \mathcal{R}$, where 
\begin{align*}
\hat{D}_{R_i}: \mathbb{F}_q^\ell \times \mathbb{F}_q^{|H_i|}\longrightarrow \mathbb{F}_q^{|W_i|}
\end{align*}
and satisfies 
\begin{align} \label{eq_Dhat}
\hat{D}_{R_i}(\mathcal{M}(Z),H_i(Z))=W_i(Z)
\end{align}
for every realization $\mathbf{z}$ of $Z$, $\mathbf{z}\in \mathbb{F}_q^{N_K}$.

Here $\ell$ is the \textit{length} of the functional index code. Let $\mathcal{B_M}=\{\mathcal{M}(\mathbf{z}):\mathbf{z}\in \mathbb{F}_q^{N_K}\}\subseteq \mathbb{F}_q^{\ell}$ denote the \textit{codebook} of the code defined by $\mathcal{M}$. The transmitter broadcasts $\ell$ length codewords, $\mathcal{M}(\mathbf{z})$, and the receivers use their respective decoding functions to obtain the values of their desired functions. The objective of code design is to minimize $\ell$ so that maximum throughput gain is achieved.

\section{From Network Computation Problem to Functional Index Coding Problem}\label{sec_fnc2fic}
In this section, we obtain a functional index coding problem from a network computation problem and show that a given network computation problem is feasible if and only if the corresponding functional index coding problem is. The proof technique followed is similar to that of \cite[Th.~1]{ICRel2}.

\begin{definition}
\label{nc_2_fic}
Let $\mathcal{F}(\mathcal{N}(V,\tilde{E}\cup E),X,\{G_t:t\in {T}\})$ be a network computation problem with $K$ messages $X_1,X_2,\ldots,X_K$, where, for every $k\in [K]$, $X_k$ is uniformly distributed over $\mathbb{F}_q^{n_k}$ for some positive integer $n_k$, the capacity of an edge $e\in E$ is $n_e$ for some positive integer $n_e$, and, every sink $t\in T$ demands a set of $n_t$ functions, $G_t=\{g_{t,1}(X),g_{t,2}(X),\ldots,g_{t,n_t}(X)\}$ for some positive integer $n_t$. A functional index coding problem $\mathcal{I_F}(Z,\mathcal{R})$ can be obtained as follows:
\begin{enumerate}
\item The transmitter has access to the message vector $Z=(\hat{X},\hat{Y})$, where $\hat{X}=(\hat{X}_k)_{k\in [K]}$ and $\hat{Y}=(\hat{Y}_e)_{e\in E}$. A message $\hat{X}_k$, $k\in [K]$, is uniformly distributed over $\mathbb{F}_q^{n_k}$ and a message $\hat{Y}_e$, $e\in E$, is uniformly distributed over $\mathbb{F}_q^{n_e}$ for every $e\in E$. The quantities $n_k$, $k\in [K]$, and $n_e$, $e\in E$, are specified by the network computation problem $\mathcal{F}$.
\item The set $\mathcal{R}=R_E\cup R_T\cup \{R_{\text{all}}\}$ of $|E|+|T|+1$ clients is defined as follows:
\begin{itemize}
\item $R_E=\{R_e:e\in E\}$, where the \textit{Has-} and \textit{Want-sets} of $R_e$ are $H_e=\{\hat{Y}_{e'}:e'\in In(e)\}$ and $W_e=\{\hat{Y}_e\}$ respectively for any $e\in E$.
\item $R_T=\{R_t:t\in T\}$, where the \textit{Has-} and \textit{Want-sets} of $R_t$ are $H_t=\{\hat{Y}_e:e\in In(t)\}$ and $W_t=\{g_{t,1}(\hat{X}),g_{t,2}(\hat{X}),\ldots,g_{t,n_t}(\hat{X})\}$ respectively for any $t\in T$.
\item The \textit{Has-set} of the client $R_{\text{all}}$ is $\{\hat{X}_k:k\in [K]\}$ and its \textit{Want-sets} is $\{\hat{Y}_e:e\in E\}$.
\end{itemize}
\end{enumerate}
\end{definition}

Note that in the functional index coding problem obtained from a network computation problem using the above definition
\begin{enumerate}[a)]
\item the \textit{Has-set} of every receiver will be a subset of the messages $\{\hat{X}_1,\hat{X}_2,\ldots,\hat{X}_K,\hat{Y}_{e_1},\hat{Y}_{e_2},\ldots,\hat{Y}_{e_{|E|}}\}$ and
\item only the \textit{Want-sets} of the receivers in $R_T$, i.e., the receivers corresponding to sinks in the starting network computation problem, will contain functions of messages.
\end{enumerate} 

\begin{theorem} \label{thm_1}
For a given network computation problem $\mathcal{F}(\mathcal{N}(V,\tilde{E}\cup E),X,\{G_t:t\in {T}\})$, let $\mathcal{I_F}(Z,\mathcal{R})$ be the corresponding functional index coding problem constructed using Definition~\ref{nc_2_fic}. Then, a network code for $\mathcal{F}$ exists if and only if a functional index code of length $\sum_{e\in E}n_e$ exists for $\mathcal{I_F}$. 
\end{theorem}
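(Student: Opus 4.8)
The plan is to prove both directions by explicitly translating between a network code for $\mathcal{F}$ and a functional index code of length $\sum_{e\in E} n_e$ for $\mathcal{I_F}$, using the global encoding kernels $\{F_e\}$ as the bridge.

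\textbf{From a network code to a functional index code.} Suppose $\{f_e : e\in E\}\cup\{D_t : t\in T\}$ is a network code for $\mathcal{F}$, and let $\{F_e : e\in \mathcal{E}\}$ be the induced global encoding kernels as in \eqref{eq_f_2_F}. I would define the encoding map $\mathcal{M}:\mathbb{F}_q^{N_K}\to\mathbb{F}_q^{\sum_{e\in E}n_e}$ on the message vector $Z=(\hat X,\hat Y)$ by
\begin{align*}
\mathcal{M}(\hat X,\hat Y) = \bigl(\hat Y_e - F_e(\hat X)\bigr)_{e\in E},
\end{align*}
i.e., for each edge $e\in E$ the codeword carries the "error" between the stored value $\hat Y_e$ and what the network code would compute on $\hat X$. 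This has the right length. Now I verify each receiver can decode. Process edges in an ancestral (topological) order. For $R_e\in R_E$: its \textit{Has-set} is $\{\hat Y_{e'} : e'\in In(e)\}$, and from the codeword it also knows $\hat Y_{e'}-F_{e'}(\hat X)$ for all $e'$, in particular for $e'\in In(e)$; subtracting recovers $F_{e'}(\hat X)$ for every $e'\in In(e)$, then $f_e$ gives $F_e(\hat X)=f_e((F_{e'}(\hat X))_{e'\in In(e)})$, and adding the codeword component $\hat Y_e-F_e(\hat X)$ yields $\hat Y_e=W_e(Z)$. (A subtlety: $R_e$ knows the raw $\hat Y_{e'}$ directly from its \textit{Has-set}, so it does not actually need to compute $F_{e'}(\hat X)$ — but computing $F_e(\hat X)$ does require the $F_{e'}(\hat X)$ values, which it gets from the codeword; both routes are fine.) For $R_{\text{all}}$: it knows all of $\hat X$, hence can compute every $F_e(\hat X)$ directly, and adding the codeword recovers every $\hat Y_e$, which is its \textit{Want-set}. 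For $R_t\in R_T$: its \textit{Has-set} is $\{\hat Y_e : e\in In(t)\}$; as with $R_e$, combining these with the codeword components for $e\in In(t)$ gives $F_e(\hat X)$ for all $e\in In(t)$, and then $D_t((F_e(\hat X))_{e\in In(t)}) = G_t(\hat X)$ by \eqref{eq_dec_F}, which is exactly $W_t(Z)$. So $\mathcal{M}$ together with these decoders is a valid functional index code of length $\sum_{e\in E}n_e$.

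\textbf{From a functional index code to a network code.} Conversely, suppose a functional index code $(\mathcal{M}, \{\hat D_{R_i}\})$ of length $\sum_{e\in E}n_e$ exists for $\mathcal{I_F}$. Write the codeword as $\mathcal{M}(\hat X,\hat Y)=(\mathcal{M}_e(\hat X,\hat Y))_{e\in E}$ with $\mathcal{M}_e$ taking values in $\mathbb{F}_q^{n_e}$. The key observation, exactly as in \cite[Th.~1]{ICRel2}, is that $R_{\text{all}}$ decodes all of $\hat Y$ from $\mathcal{M}(\hat X,\hat Y)$ and $\hat X$ alone; since this must hold for \emph{every} realization, and $\hat Y$ ranges freely, the codeword map, restricted appropriately, is invertible in $\hat Y$ given $\hat X$. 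Concretely one argues that fixing $\hat X=\mathbf 0$ (say) the map $\hat Y\mapsto \mathcal{M}(\mathbf 0,\hat Y)$ is injective on $\mathbb{F}_q^{\sum_{e\in E}n_e}$, hence bijective. I then define, inductively along an ancestral ordering of $E$, the local kernels: for $e\in E$ set $Y_e := f_e\bigl((Y_{e'})_{e'\in In(e)}\bigr)$ where $f_e$ is obtained by feeding the decoder $\hat D_{R_e}$ the "phantom" codeword $\mathcal{M}(\mathbf x^*, \mathbf y^*)$ built from the already-constructed earlier values — more cleanly, one recovers the functional dependence of $\hat Y_e$ on $\hat X$ that is consistent with the code, and shows it can be realized by a local kernel at $e$. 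The decoding map $D_t$ is read off from $\hat D_{R_t}$ similarly: $\hat D_{R_t}$ maps $(\mathcal{M}(Z),(\hat Y_e)_{e\in In(t)})$ to $G_t(\hat X)$; substituting the recovered consistent codeword gives a function of $(Y_e)_{e\in In(t)}$ alone equal to $G_t(X)$, which is the required $D_t$. One checks acyclicity is respected because the construction follows the ancestral order and each $f_e$ depends only on $In(e)$.

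\textbf{Main obstacle.} The routine direction is network-code-to-index-code; the delicate part is the converse, specifically arguing that an arbitrary functional index code of the \emph{exact} length $\sum_{e\in E}n_e$ forces enough structure to pull back to well-defined local encoding kernels. The client $R_{\text{all}}$ is what makes this work — demanding all of $\hat Y$ with only $\hat X$ as side information pins the codeword down to (a reparametrization of) the $\hat Y$ coordinates, so that the $R_e$ and $R_t$ decoders become genuine functions of the incoming-edge symbols. I expect the technical care to lie in making the inductive pullback precise: showing that the function of $\hat X$ that the code implicitly assigns to each $\hat Y_e$ depends only on the values associated with $In(e)$ (not on all of $\hat X$), so that it is legitimately a \emph{local} kernel; this is handled by exploiting $R_e$'s decoder, whose inputs are exactly $\{\hat Y_{e'} : e'\in In(e)\}$ and the codeword, together with the already-established bijectivity coming from $R_{\text{all}}$. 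This mirrors \cite[Th.~1]{ICRel2}, with the only new ingredient being that $W_t$ now holds functions $G_t(\hat X)$ rather than raw messages — which changes nothing in the argument since the decoders are allowed to be arbitrary functions.
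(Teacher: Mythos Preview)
Your forward direction is essentially identical to the paper's (the paper writes $\hat Y_e + F_e(\hat X)$ rather than $\hat Y_e - F_e(\hat X)$, which is immaterial), and your identification of $R_{\text{all}}$ as the source of the bijectivity needed for the converse is exactly right.

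Where your converse is underspecified, and potentially on the wrong foot, is in the definition of the local kernels. You propose to feed $\hat D_{R_e}$ a ``phantom'' codeword $\mathcal{M}(\mathbf x^*,\mathbf y^*)$ ``built from the already-constructed earlier values''; but if that codeword varies with the network input $\mathbf x$, then $f_e$ is not a single fixed function and you do not have a network code. The paper's move is cleaner and avoids this entirely: from $R_{\text{all}}$'s decodability one gets that $\hat{\mathbf y}\mapsto \mathcal{M}(\hat{\mathbf x},\hat{\mathbf y})$ is a bijection of $\mathbb{F}_q^{N_E}$ for \emph{every} $\hat{\mathbf x}$ (not just $\hat{\mathbf x}=\mathbf 0$), so for each $\hat{\mathbf x}$ there is a unique $\hat{\mathbf y}_{\hat{\mathbf x}}$ with $\mathcal{M}(\hat{\mathbf x},\hat{\mathbf y}_{\hat{\mathbf x}})=\mathbf 0$. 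One then \emph{freezes} the codeword argument to the constant $\mathbf 0$ and defines, once and for all,
\[
f_e(\,\cdot\,)=\hat D_{R_e}(\mathbf 0,\,\cdot\,),\qquad D_t(\,\cdot\,)=\hat D_{R_t}(\mathbf 0,\,\cdot\,).
\]
These are honest fixed functions of the incoming-edge symbols. The inductive step (in ancestral order) then shows that the resulting network code, run on input $\mathbf x$, produces exactly $\hat{\mathbf y}_{\hat{\mathbf x}}$ on the edges; since $(\hat{\mathbf x},\hat{\mathbf y}_{\hat{\mathbf x}})$ is a genuine message vector with codeword $\mathbf 0$, the decoders $\hat D_{R_e}(\mathbf 0,\cdot)$ and $\hat D_{R_t}(\mathbf 0,\cdot)$ are guaranteed by \eqref{eq_Dhat} to output $\hat y_{e,\hat{\mathbf x}}$ and $G_t(\hat{\mathbf x})$ respectively. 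Your ``Main obstacle'' paragraph shows you see the ingredients; the missing crystallization is that the codeword argument should be a constant, not something rebuilt from the evolving network state.
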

\begin{proof}
Let $N_K=\sum_{k\in [K]}n_k$ and $N_E=\sum_{e\in E}n_e$. To prove the theorem, we will show that any network code for $\mathcal{F}$ can be converted into a functional index code of length $N_E$ for $\mathcal{I_F}$ and vice versa.

\textit{Converting a network code into a functional index code:} Suppose there is a network code for $\mathcal{F}$ and let $\{f_e:e\in E\}$, $\{F_e:e\in \tilde{E}\}$, and $\{D_t:t\in T\}$ be its local encoding kernels, global encoding kernels, and decoding functions respectively. 


Define a map
\begin{align*}
\mathcal{M}:\mathbb{F}_q^{N_K + N_E}\longrightarrow \mathbb{F}_q^{N_E}
\end{align*}
by
\begin{align} \label{eq_map}
\mathcal{M}(Z)=(M_e(Z))_{e\in E}=(\hat{Y}_e+F_e(\hat{X}))_{e\in E},
\end{align}
where $M_e:Z=(\hat{X},\hat{Y})\mapsto \hat{Y}_e+F_e(\hat{X})$, for any $e\in E$. We will verify that broadcasting $(M_e(Z))_{e\in E}$ satisfies all the receivers of $\mathcal{I_F}$ by constructing a decoding map for each receiver using the local and global encoding kernels and the decoding functions specified by the network code. The length of this code is $\sum_{e\in E}n_e=N_E$.

\begin{enumerate}

\item A receiver $R_e\in R_E$ knows $H_{e}=\{\hat{Y}_{e'}:e'\in In(e)\}$ and wants $W_{e}=\{\hat{Y}_e\}$. Note that $H_{e}(Z)=(\hat{Y}_{e'})_{e'\in In(e)}$ and $W_{e}(Z)=\hat{Y}_e$. The decoding map $\hat{D}_{R_e}$ is obtained as follows:
\begin{itemize}
\item Compute $F_{e'}(\hat{X})=M_{e'}(Z)-\hat{Y}_{e'}$ for each $e'\in In(e)$ using the known information $\hat{Y}_{e'}$ and the broadcast message $M_{e'}(Z)$.
\item Then, using \eqref{eq_f_2_F}, compute
\begin{align*}
F_e(\hat{X})=f_e\left((F_{e'}(\hat{X}))_{e'\in In(e)}\right).
\end{align*}
\item The demanded message is $\hat{Y}_e=M_e(Z)-F_e(\hat{X})$.
\end{itemize}
Thus, for a receiver $R_e\in R_E$, the map
\begin{align} \label{eq_dh_E} \nonumber
\hat{D}_{R_e}\left(\;(M_e(Z))_{e\in E}\;,\;(\hat{Y}_{e'})_{e'\in In(e)}\;\right)=\quad & 
\\ M_e(Z)-f_e \left( (M_{e'}(Z)-\hat{Y}_{e'})_{e'\in In(e)} \right)&=\hat{Y}_e
\end{align} 
satisfies \eqref{eq_Dhat} and hence is a decoding function for $R_e$. 

\item A receiver $R_t\in R_T$ knows $H_{t}=\{\hat{Y}_{e'}:e'\in In(t)\}$ and wants $W_{t}=\{g_{t,1}(\hat{X}),g_{t,2}(\hat{X}),\ldots,g_{t,n_t}(\hat{X})\}$. Note that $H_{t}(Z)=(\hat{Y}_{e'})_{e'\in In(t)}$ and $W_{t}(Z)=G_t(\hat{X})$. The decoding map $\hat{D}_{R_t}$ is obtained as follows:
\begin{itemize}
\item Compute $F_{e'}(\hat{X})=M_{e'}(Z)-\hat{Y}_{e'}$ for each $e'\in In(t)$ using the known information $\hat{Y}_{e'}$ and the broadcast message $M_{e'}(Z)$.
\item Then, by \eqref{eq_dec_F}
\begin{align*}
G_t(\hat{X})=D_t\left((F_{e'}(\hat{X}))_{e'\in In(t)}\right).
\end{align*}
\end{itemize}
Thus, for a receiver $R_t\in R_T$, the map
\begin{align} \label{eq_dh_T} \nonumber
\hat{D}_{R_t}\left(\;(M_e(Z))_{e\in E}\,,\,(\hat{Y}_{e'})_{e'\in In(t)}\;\right)= \qquad\;\;\; &
\\D_t\left(\;(M_{e'}(Z)-\hat{Y}_{e'})_{e'\in In(t)}\;\right)=\;G_t&(\hat{X})
\end{align}
satisfies \eqref{eq_Dhat} and hence is a decoding function for $R_t$.

\item Since $R_{\text{all}}$ knows all $\hat{X}_k$, $k\in [K]$, and demands all $\hat{Y}_e$, $e\in E$, the decoding function for receiver $R_{\text{all}}$ is 
\begin{align} \label{eq_dh_all} \nonumber
\hat{D}_{R_{\text{all}}}\left((M_e(Z))_{e\in E},\hat{X} \right)&= \left( M_e(Z)-F_e(\hat{X}) \right)_{e\in E}\\ &=(\hat{Y}_e)_{e\in E}.
\end{align} 

\end{enumerate}

Thus, any network code for $\mathcal{F}$ can be converted into a functional index code of length $N_E$ for $\mathcal{I_F}$ using \eqref{eq_map}, \eqref{eq_dh_E}, \eqref{eq_dh_T}, and \eqref{eq_dh_all}.

We now prove the converse.

\textit{Converting a functional index code into a network code:} Assume that a functional index of length $N_E$ is given for $\mathcal{I_F}(Z,\mathcal{R})$, i.e., an encoding map $\mathcal{M}$ and decoding functions $\{\hat{D}_{R_e}:R_e\in R_E\}\cup \{\hat{D}_{R_t}:R_t\in R_T\}\cup \{\hat{D}_{R_{\text{all}}}\}$ are specified. We will construct a network code for the network computation problem $\mathcal{F}(\mathcal{N}(V,\tilde{E}\cup E),X,\{G_t:t\in {T}\})$, i.e., specify local encoding kernels for edges in $E$ and decoding functions for sinks in $T$, using the given functional index code. Let $\mathcal{B_M}\subseteq \mathbb{F}_q^{N_E}$ denote the codebook.

Since $R_{\text{all}}$ can decode its demanded messages, it must be true that $\mathcal{M}(\hat{\mathbf{x}},\hat{\mathbf{y}})\neq \mathcal{M}(\hat{\mathbf{x}},\hat{\mathbf{y}}')$ for every $\hat{\mathbf{x}}\in \mathbb{F}_q^{N_K}$ and every $\hat{\mathbf{y}},\hat{\mathbf{y}}'\in \mathbb{F}_q^{N_E}$ such that $\hat{\mathbf{y}}\neq \hat{\mathbf{y}}'$ (otherwise, if $\mathcal{M}(\hat{\mathbf{x}},\hat{\mathbf{y}})\!=\! \mathcal{M}(\hat{\mathbf{x}},\hat{\mathbf{y}}')\!=\!\mathbf{m}$ (say), then, given $\hat{\mathbf{x}}$ and the broadcast codeword $\mathbf{m}$, $R_{\text{all}}$ will not be able to distinguish between $\hat{\mathbf{y}}$ and $\hat{\mathbf{y}}'$).\footnote{For every receiver $R$ in a functional index coding problem $\mathcal{I}(Z,\mathcal{R})$, if, for some $\mathbf{z}\neq \mathbf{z}'$, $H_R(\mathbf{z})=H_R(\mathbf{z}')$ and $W_R(\mathbf{z})\neq W_R(\mathbf{z}')$, then an encoding map $\mathcal{M}$ must satisfy $\mathcal{M}(\mathbf{z})\neq \mathcal{M}(\mathbf{z}')$, otherwise the demands of $R$ will not be met \cite[Prop.~1]{FICP}. This constraint on the encoding map is referred to as the \textit{generalized exclusive law} \cite[Def.~3]{FICP}.} Thus, for each $\hat{\mathbf{x}}\in \mathbb{F}_q^{N_K}$, the elements of $B_{\hat{\mathbf{x}}} = \{\mathcal{M}(\hat{\mathbf{x}},\hat{\mathbf{y}}):\hat{\mathbf{y}}\in \mathbb{F}_q^{N_E}\}$ are all distinct and hence $|B_{\hat{\mathbf{x}}}|=q^{N_E}$. Consequently, $B_{\hat{\mathbf{x}}} =\mathbb{F}_q^{N_E}$ (since $B_{\hat{\mathbf{x}}}\subseteq \mathcal{B_M}\subseteq \mathbb{F}_q^{N_E}$ and $|B_{\hat{\mathbf{x}}}|=q^{N_E}$). Thus, given any $\mathbf{m}\in \mathcal{B_M}$ and any $\hat{\mathbf{x}}\in \mathbb{F}_q^{N_K}$, there exists a unique realization $\hat{\mathbf{y}}_{\hat{\mathbf{x}}}$ of $\hat{Y}$ such that $\mathcal{M}(\hat{\mathbf{x}},\hat{\mathbf{y}}_{\hat{\mathbf{x}}})=\mathbf{m}$. Specifically, let $\mathbf{m}=\mathbf{0}$, where $\mathbf{0}$ is the all zero $N_E$-tuple, and $\mathcal{A}_{0}=\{(\hat{\mathbf{x}},\hat{\mathbf{y}}_{\hat{\mathbf{x}}}):\hat{\mathbf{x}}\in \mathbb{F}_q^{N_K},\mathcal{M}(\hat{\mathbf{x}},\hat{\mathbf{y}}_{\hat{\mathbf{x}}})=\mathbf{0}\}$. Note that $|\mathcal{A}_0|=q^{N_K}$ (since for each $\hat{\mathbf{x}}\in \mathbb{F}_q^{N_K}$, there exists a unique $\hat{\mathbf{y}}_{\hat{\mathbf{x}}}$ with $\mathcal{M}(\hat{\mathbf{x}},\hat{\mathbf{y}}_{\hat{\mathbf{x}}})=\mathbf{0}$). 
Let $\hat{\mathbf{y}}_{\hat{\mathbf{x}}}=(\hat{y}_{e,\hat{\mathbf{x}}})_{e\in E}$. Then, by \eqref{eq_Dhat}, for every $(\hat{\mathbf{x}},\hat{\mathbf{y}}_{\hat{\mathbf{x}}})\in \mathcal{A}_0$
\begin{align} \label{eq_dhat_e} \nonumber
\hat{D}_{R_e}\left(\mathcal{M}(\hat{\mathbf{x}},\hat{\mathbf{y}}_{\hat{\mathbf{x}}}), H_e(\hat{\mathbf{x}},\hat{\mathbf{y}}_{\hat{\mathbf{x}}}) \right)&= \hat{D}_{R_e}\left( \mathbf{0}, (\hat{y}_{e',\hat{\mathbf{x}}})_{e'\in In(e)} \right)\\&= \hat{y}_{e,\hat{\mathbf{x}}}
\end{align}
for every receiver $R_e\in R_E$ and 
\begin{align} \label{eq_dhat_t} \nonumber
\hat{D}_{R_t}\left(\mathcal{M}(\hat{\mathbf{x}},\hat{\mathbf{y}}_{\hat{\mathbf{x}}}), H_t(\hat{\mathbf{x}},\hat{\mathbf{y}}_{\hat{\mathbf{x}}}) \right)&= \hat{D}_{R_t}\left( \mathbf{0}, (\hat{y}_{e',\hat{\mathbf{x}}})_{e'\in In(t)} \right)\\&= G_t(\hat{\mathbf{x}})
\end{align}
for every receiver $R_t\in R_T$, and $\hat{D}_{R_{\text{all}}}\left(\mathcal{M}(\hat{\mathbf{x}},\hat{\mathbf{y}}_{\hat{\mathbf{x}}}),\hat{\mathbf{x}}\right)=\hat{D}_{R_{\text{all}}}\left(\mathbf{0},\hat{\mathbf{x}}\right)=\hat{\mathbf{y}}_{\hat{\mathbf{x}}}$.

For an edge $e\in E$, define a map
\begin{align*}
f_e:\mathbb{F}_q^{\sum_{e'\in In(e)}n_{e'}}\longrightarrow \mathbb{F}_q^{n_e}
\end{align*}
by 
\begin{align} \label{eq_dhre2fe}
f_e\left((Y_{e'})_{e'\in In(e)}\right)=\hat{D}_{R_e}\left(\mathbf{0},(Y_{e'})_{e'\in In(e)}\right)
\end{align}
and for a sink $t\in T$, define a map
\begin{align*}
D_t:\mathbb{F}_q^{\sum_{e'\in In(t)}n_{e'}}\longrightarrow \mathbb{F}_q^{n_t}
\end{align*}
by
\begin{align} \label{eq_dhrt2dt}
D_t\left((X_{e'})_{e'\in In(t)}\right)=\hat{D}_{R_t}\left(\mathbf{0},(Y_{e'})_{e'\in In(t)}\right).
\end{align}
We will verify that $\{f_e:e\in E\}\cup\{D_t:t\in T\}$ is a network code for $\mathcal{F}$.

Let $\mathbf{x}$ be an arbitrary realization of $X$ and $(\hat{\mathbf{x}},\hat{\mathbf{y}}_{\hat{\mathbf{x}}})$ be the element of $\mathcal{A}_0$ such that $\mathbf{x}=\hat{\mathbf{x}}$. 

Recall that a tailless edge $\tilde{e}_k\in \tilde{E}$, $k\in [K]$, represents a source message $X_k$ and that $Y_{\tilde{e}}=X_k$. First we will show that $\mathbf{y}=\hat{\mathbf{y}}_{\hat{\mathbf{x}}}$, i.e., $y_e=\hat{y}_{e,\hat{\mathbf{x}}}$ for every $e\in E$, by induction on an ancestral ordering of edges in $E$ as follows. Let an edge $e\in E$ originate at a source node such that $In(e)\subseteq \tilde{E}$ (or $In(e)\cap E=\emptyset$) (base case of induction). Then, by \eqref{eq_dhre2fe} and \eqref{eq_dhat_e},
\begin{align*} 
y_e=f_e\left((y_{e'})_{e'\in In(e)}\right)=\hat{D}_{R_e}\left(\mathbf{0},(y_{e'})_{e'\in In(e)}\right)=\hat{y}_{e,\hat{\mathbf{x}}}.
\end{align*}
Now let for some edge $e\in E$, every $e'\in In(e)$ satisfies $y_{e'}=\hat{y}_{e',\hat{\mathbf{x}}}$ (induction hypothesis). Then, again by \eqref{eq_dhre2fe} and \eqref{eq_dhat_e},
\begin{align*} 
y_e=f_e\left((y_{e'})_{e'\in In(e)}\right)=\hat{D}_{R_e}\left(\mathbf{0},(y_{e'})_{e'\in In(e)}\right)=\hat{y}_{e,\hat{\mathbf{x}}}. 
\end{align*}
Thus, $y_e=\hat{y}_{e,\hat{\mathbf{x}}}$ for every $e\in E$.
%

Now consider any $t\in T$. Then, by \eqref{eq_dhrt2dt} and \eqref{eq_dhat_t},
\begin{align*} 
D_t\left( (y_{e'})_{e'\in In(t)} \right)&=\hat{D}_{R_t}\left( \mathbf{0},(y_{e'})_{e'\in In(t)} \right)
\\ \nonumber &=\hat{D}_{R_t}\left( \mathbf{0},(\hat{y}_{e',\hat{\mathbf{x}}})_{e'\in In(t)} \right)=G_t(\hat{\mathbf{x}})
\\ & = G_t(\mathbf{x}),
\end{align*}
which is the required function value at $t$.

Thus, \eqref{eq_dhre2fe} and \eqref{eq_dhrt2dt} define a network code for $\mathcal{F}$.
\end{proof}

\begin{remark}
In the proof of the converse of Theorem~1, we chose $\mathbf{m}$ to be $\mathbf{0}$ to define a network code (\eqref{eq_dhre2fe} and \eqref{eq_dhrt2dt}). Instead of $\mathbf{0}$, $\mathbf{m}$ can be chosen to be any element of $\mathbb{F}_q^{N_E}$ to define a network code. Thus, for any $\mathbf{m}\in\mathbb{F}_q^{N_E}$,
\begin{align*}
f_e\left((Y_{e'})_{e'\in In(e)}\right)&=\hat{D}_{R_e}\left(\mathbf{m},(Y_{e'})_{e'\in In(e)}\right)
\\
D_t\left((X_{e'})_{e'\in In(t)}\right)&=\hat{D}_{R_t}\left(\mathbf{m},(Y_{e'})_{e'\in In(t)}\right)
\end{align*}
is a network code.
\end{remark}

\begin{example}
Consider the network computation problem given in Fig.~\ref{fig_fnc1}. There are $11$ source nodes each generating a $10$-bit long message and there is only one sink which wants to compute the maximum among the decimal equivalents of messages. All dashed and solid edges have capacity $10$ bits $(N_E=130)$. By $\max$, we mean the maximum of the decimal equivalent. The local encoding kernels are given adjacent to the edges.
\begin{figure}
\centering
\includegraphics[scale=0.5]{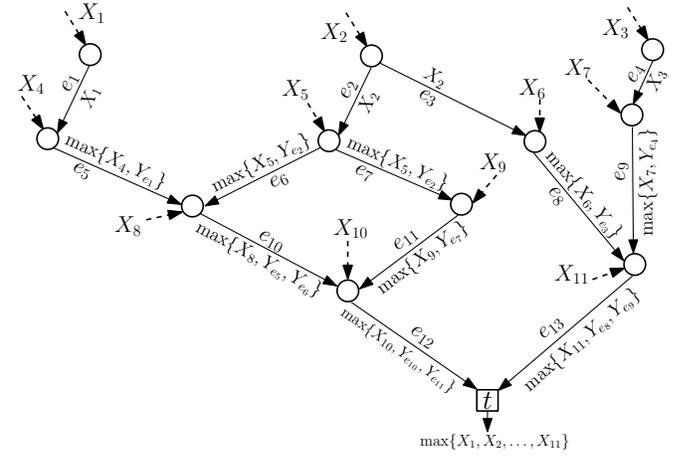}
\caption{A function computation problem.}
\label{fig_fnc1}
\vspace{-10pt}
\end{figure}
The local and global encoding kernels of edges $e_1,e_2,\ldots,e_{13}$ are given in Table~\ref{tab_LEKGEK} below. 
\begin{table}[h] 
\caption{}
\label{tab_LEKGEK}
\scriptsize
\centering
\renewcommand{\tabcolsep}{4pt}
\renewcommand{\arraystretch}{1.4}
\begin{tabular}{|c|c|c|}
\hline
$Y_e$ & $f_e\left((Y_{e'})_{e'\in In(e)}\right)$& $F_e(X_1,\ldots,X_{11})$ \\ \hline \hline
$Y_{e_1}$ & $X_1$& $X_1$ \\ \hline
$Y_{e_2}$ & $X_2$& $X_2$ \\ \hline
$Y_{e_3}$ & $X_2$& $X_2$ \\ \hline
$Y_{e_4}$ & $X_3$& $X_3$ \\ \hline
$Y_{e_5}$ & $\max\{Y_{e_1},X_4\}$ & $\max\{X_1,X_4\}$\\ \hline
$Y_{e_6}$ & $\max\{Y_{e_2},X_5\}$ & $\max\{X_2,X_5\}$\\ \hline
$Y_{e_7}$ & $\max\{Y_{e_2},X_5\}$ & $\max\{X_2,X_5\}$\\ \hline
$Y_{e_8}$ & $\max\{Y_{e_3},X_6\}$ & $\max\{X_2,X_6\}$\\ \hline
$Y_{e_9}$ & $\max\{Y_{e_4},X_7\}$ & $\max\{X_3,X_7\}$\\ \hline
$Y_{e_{10}}$ & $\max\{Y_{e_5},Y_{e_6},X_8\}$&$\max\{X_1,X_2,X_4,X_5,X_8\}$\\ \hline
$Y_{e_{11}}$ & $\max\{Y_{e_7},X_9\}$& $\max\{X_2,X_5,X_9\}$\\ \hline
$Y_{e_{12}}$ & $\max\{Y_{e_{10}},Y_{e_{11}},X_{10}\}$& $\max\{X_1,X_2,X_4,X_5,X_8,X_9,X_{10}\}$\\ \hline
$Y_{e_{13}}$ & $\max\{Y_{e_8},Y_{e_9},X_{11}\}$& $\max\{X_2,X_3,X_6,X_7,X_{11}\}$\\ \hline
\end{tabular}
\end{table}

\noindent The decoding function for sink $t$ is 
\begin{align*}
D_t(Y_{e_{12}},Y_{e_{13}})=\max\{Y_{e_{12}},Y_{e_{13}}\}=\max\{X_1,X_2,\ldots,X_{11}\}.
\end{align*}

The corresponding functional index coding problem obtained using Definition~\ref{nc_2_fic} is given by the first three columns of Table~\ref{tab_decmap1} at the top of the following page.
\begin{table*}[t] 
\scriptsize
\centering
\renewcommand{\tabcolsep}{1pt}
\renewcommand{\arraystretch}{1.4}
\begin{minipage}[t]{.6\linewidth}
\caption{} \label{tab_decmap1}  \vspace{-8pt} 
\begin{tabular}[t]{|>{\centering\arraybackslash}m{0.25in}|>{\centering\arraybackslash}m{0.75in}|>{\centering\arraybackslash}m{0.9in}|>{\centering\arraybackslash}m{2.2in}|}
\hline
\textit{Client} &\textit{Has-set} &\textit{Want-set}& \textit{Decoding Map} \\ \hline \hline
$R_{e_1}$ & $\hat{X}_1$ & $\hat{Y}_{e_1}$ & $M_{e_1}+\hat{X}_1$ \\ \hline
$R_{e_2}$ & $\hat{X}_2$ & $\hat{Y}_{e_2}$ & $M_{e_2}+\hat{X}_2$ \\ \hline
$R_{e_3}$ & $\hat{X}_2$ & $\hat{Y}_{e_3}$ & $M_{e_3}+\hat{X}_2$ \\ \hline
$R_{e_4}$ & $\hat{X}_3$ & $\hat{Y}_{e_4}$ & $M_{e_4}+\hat{X}_3$ \\ \hline
$R_{e_5}$ & $\hat{X}_4,\hat{Y}_{e_1}$ & $\hat{Y}_{e_5}$ & $M_{e_5}+\max\{M_{e_1}+\hat{Y}_{e_1},\hat{X}_4\}$ \\ \hline
$R_{e_6}$ & $\hat{X}_5,\hat{Y}_{e_2}$ & $\hat{Y}_{e_6}$ & $M_{e_6}+\max\{M_{e_2}+\hat{Y}_{e_2},\hat{X}_5\}$ \\ \hline
$R_{e_7}$ & $\hat{X}_5,\hat{Y}_{e_2}$ & $\hat{Y}_{e_7}$ & $M_{e_7}+\max\{M_{e_2}+\hat{Y}_{e_2},\hat{X}_5\}$ \\ \hline
$R_{e_8}$ & $\hat{X}_6,\hat{Y}_{e_3}$ & $\hat{Y}_{e_8}$ & $M_{e_8}+\max\{M_{e_3}+\hat{Y}_{e_3},\hat{X}_6\}$ \\ \hline
$R_{e_9}$ & $\hat{X}_7,\hat{Y}_{e_4}$ & $\hat{Y}_{e_9}$ & $M_{e_9}+\max\{M_{e_4}+\hat{Y}_{e_4},\hat{X}_7\}$ \\ \hline
$R_{e_{10}}$ & $\hat{X}_8,\hat{Y}_{e_5},\hat{Y}_{e_6}$ & $\hat{Y}_{e_{10}}$ & $M_{e_{10}}+\max\{M_{e_5}+\hat{Y}_{e_5},M_{e_6}+\hat{Y}_{e_6},\hat{X}_8\}$ \\ \hline
$R_{e_{11}}$ & $\hat{X}_9,\hat{Y}_{e_7}$ & $\hat{Y}_{e_{11}}$ & $M_{e_{11}}+\max\{M_{e_7}+\hat{Y}_{e_7},\hat{X}_9\}$ \\ \hline
$R_{e_{12}}$ & $\hat{X}_{10},\hat{Y}_{e_{10}},\hat{Y}_{e_{11}}$ & $\hat{Y}_{e_{12}}$ & $M_{e_{12}}+\max\{M_{e_{10}}+\hat{Y}_{e_{10}},M_{e_{11}}+\hat{Y}_{e_{11}},\hat{X}_{10}\}$ \\ \hline
$R_{e_{13}}$ & $\hat{X}_{11},\hat{Y}_{e_8},\hat{Y}_{e_9}$ & $\hat{Y}_{e_{13}}$ & $M_{e_{13}}+\max\{M_{e_8}+\hat{Y}_{e_8},M_{e_9}+\hat{Y}_{e_9},\hat{X}_{11}\}$ \\ \hline
$R_t$ & $\hat{Y}_{e_{12}},\hat{Y}_{e_{13}}$ & $\max\{\hat{X}_1,\ldots,\hat{X}_{11}\}$ & $\max\{M_{e_{12}}+\hat{Y}_{e_{12}},M_{e_{13}}+\hat{Y}_{e_{13}}\}$ \\ 
\end{tabular}

\end{minipage}%
\hfill
\begin{minipage}[t]{.4\linewidth}
\centering
\caption{} \label{tab_M_e1} \vspace{-8pt}
\begin{tabular}[t]{|c|c|}
\hline
$M_e(\hat{X},\hat{Y})$ & $\hat{Y}_e+F_e(\hat{X})$ \\ \hline \hline
$M_{e_1}$ & $\hat{Y}_{e_1}+\hat{X}_1$ \\ \hline 
$M_{e_2}$ & $\hat{Y}_{e_2}+\hat{X}_2$ \\ \hline 
$M_{e_3}$ & $\hat{Y}_{e_3}+\hat{X}_2$ \\ \hline 
$M_{e_4}$ & $\hat{Y}_{e_4}+\hat{X}_3$ \\ \hline 
$M_{e_5}$ & $\hat{Y}_{e_5}+\max\{\hat{X}_1,\hat{X}_4\}$ \\ \hline 
$M_{e_6}$ & $\hat{Y}_{e_6}+\max\{\hat{X}_2,\hat{X}_5\}$ \\ \hline 
$M_{e_7}$ & $\hat{Y}_{e_7}+\max\{\hat{X}_2,\hat{X}_5\}$ \\ \hline 
$M_{e_8}$ & $\hat{Y}_{e_8}+\max\{\hat{X}_2,\hat{X}_6\}$ \\ \hline 
$M_{e_9}$ & $\hat{Y}_{e_9}+\max\{\hat{X}_3,\hat{X}_7\}$ \\ \hline 
$M_{e_{10}}$ & $\hat{Y}_{e_{10}}+\max\{\hat{X}_1,\hat{X}_2,\hat{X}_4,\hat{X}_5,\hat{X}_8\}$ \\ \hline 
$M_{e_{11}}$ & $\hat{Y}_{e_{11}}+\max\{\hat{X}_2,\hat{X}_5,\hat{X}_9\}$ \\ \hline 
$M_{e_{12}}$ & $\hat{Y}_{e_{12}}+\max\{\hat{X}_1,\hat{X}_2,\hat{X}_4,\hat{X}_5,\hat{X}_8,\hat{X}_9,\hat{X}_{10}\}$ \\ \hline 
$M_{e_{13}}$ & $\hat{Y}_{e_{13}}+\max\{\hat{X}_2,\hat{X}_3,\hat{X}_6,\hat{X}_7,\hat{X}_{11}\}$ \\ \hline 
\end{tabular}
\end{minipage}%
\\
\begin{minipage}[t]{\linewidth}
\begin{tabular}{|>{\centering\arraybackslash}m{0.25in}|>{\centering\arraybackslash}m{0.75in}|>{\centering\arraybackslash}m{0.9in}|>{\centering\arraybackslash}m{4.2in}|}
\hline
 &  & $\hat{Y}_1,\hat{Y}_2,\hat{Y}_3,\hat{Y}_4,\hat{Y}_5,$ & $(M_{e_1}+\hat{X}_1,\,M_{e_2}+\hat{X}_2,\,M_{e_3}+\hat{X}_2,\,M_{e_4}+\hat{X}_3,\,M_{e_5}+\max\{\hat{X}_1,\hat{X}_4\},$ \\ \cline{3-4}
$R_{\text{all}}$ & $\hat{X}_1,\ldots,\hat{X}_{11}$ & $\hat{Y}_6,\hat{Y}_7,\hat{Y}_8,$ & $M_{e_6}+\max\{\hat{X}_2,\hat{X}_5\},\,M_{e_7}+\max\{\hat{X}_2,\hat{X}_5\},\,M_{e_8}+\max\{\hat{X}_2,\hat{X}_6\},$ \\ \cline{3-4}
 & & $\hat{Y}_9,\hat{Y}_{10},\hat{Y}_{11},$ & $M_{e_9}+\max\{\hat{X}_3,\hat{X}_7\},\,M_{e_{10}}+\max\{\hat{X}_1,\hat{X}_2,\hat{X}_4,\hat{X}_5,\hat{X}_8\},\,M_{e_{11}}+\max\{\hat{X}_2,\hat{X}_5,\hat{X}_9\},$ \\ \cline{3-4}
 & & $\hat{Y}_{12},\hat{Y}_{13}$ & $M_{e_{12}}+\max\{\hat{X}_1,\hat{X}_2,\hat{X}_4,\hat{X}_5,\hat{X}_8,\hat{X}_9,\hat{X}_{10}\},\,M_{e_{13}}+\max\{\hat{X}_2,\hat{X}_3,\hat{X}_6,\hat{X}_7,\hat{X}_{11}\})$ \\ \hline
\end{tabular}
\end{minipage}
\vspace{10pt}
\hrule
\end{table*}
The message vector is $(\hat{X},\hat{Y})=(\hat{X}_1,\ldots,\hat{X}_{11},\hat{Y}_{e_1},\ldots,\hat{Y}_{e_{13}})$ and each message is a $10$-bit word. A functional index code obtained using \eqref{eq_map} is given in Table~\ref{tab_M_e1}.
The decoding maps of clients obtained using \eqref{eq_dh_E},\eqref{eq_dh_T}, and \eqref{eq_dh_all} are given in the last column of Table~\ref{tab_decmap1} (by $M_e$ we mean $M_e(\hat{X},\hat{Y})$ for all $e\in E$). \hfill $\square$
\end{example}

\begin{table*}[t] 
\caption{}
\label{tab_decmap2}
\scriptsize
\centering
\renewcommand{\tabcolsep}{4pt}
\renewcommand{\arraystretch}{1.4}
\begin{tabular}{|c|c|c|c|}
\hline
\textit{Client} &\textit{Has-set} &\textit{Want-set}& \textit{Decoding Map} \\ \hline \hline
$R_{e_1}$ & $\hat{X}_1,\hat{X}_2$ & $\hat{Y}_{e_1}$ & $M_{e_1}+\hat{X}_1$ \\ \hline
$R_{e_2}$ & $\hat{X}_1,\hat{X}_2$ & $\hat{Y}_{e_2}$ & $M_{e_2}+\hat{X}_2$ \\ \hline
$R_{e_3}$ & $\hat{X}_1,\hat{X}_2$ & $\hat{Y}_{e_3}$ & $M_{e_3}+\hat{X}_1+\hat{X}_2$ \\ \hline
$R_{e_4}$ & $\hat{X}_3,\hat{X}_4$ & $\hat{Y}_{e_4}$ & $M_{e_4}+\hat{X}_3+\hat{X}_4$ \\ \hline
$R_{e_5}$ & $\hat{X}_3,\hat{X}_4$ & $\hat{Y}_{e_5}$ & $M_{e_5}+\hat{X}_4$ \\ \hline
$R_{e_6}$ & $\hat{X}_3,\hat{X}_4$ & $\hat{Y}_{e_6}$ & $M_{e_6}+\hat{X}_3$ \\ \hline
$R_{e_7}$ & $\hat{Y}_{e_1}$ & $\hat{Y}_{e_7}$ & $M_{e_7}+M_{e_1}+\hat{Y}_{e_1}$ \\ \hline
$R_{e_8}$ & $\hat{Y}_{e_1}$ & $\hat{Y}_{e_8}$ & $M_{e_8}+M_{e_1}+\hat{Y}_{e_1}$ \\ \hline
$R_{e_9}$ & $\hat{Y}_{e_1}$ & $\hat{Y}_{e_9}$ & $M_{e_9}+M_{e_1}+\hat{Y}_{e_1}$ \\ \hline
$R_{e_{10}}$ & $\hat{Y}_{e_2},\hat{Y}_{e_4}$ & $\hat{Y}_{e_{10}}$ & $M_{e_{10}}+M_{e_2}+\hat{Y}_{e_2}+M_{e_4}+\hat{Y}_{e_4}$ \\ \hline
$R_{e_{11}}$ & $\hat{Y}_{e_2},\hat{Y}_{e_4}$ & $\hat{Y}_{e_{11}}$ & $M_{e_{11}}+M_{e_2}+\hat{Y}_{e_2}+M_{e_4}+\hat{Y}_{e_4}$ \\ \hline
$R_{e_{12}}$ & $\hat{Y}_{e_2},\hat{Y}_{e_4}$ & $\hat{Y}_{e_{12}}$ & $M_{e_{12}}+M_{e_2}+\hat{Y}_{e_2}+M_{e_4}+\hat{Y}_{e_4}$ \\ \hline
\end{tabular}
\qquad
\begin{tabular}{|c|c|c|c|}
\hline
\textit{Client} &\textit{Has-set} &\textit{Want-set}& \textit{Decoding Map} \\ \hline \hline
$R_{e_{13}}$ & $\hat{Y}_{e_3},\hat{Y}_{e_5}$ & $\hat{Y}_{e_{13}}$ & $M_{e_{13}}+M_{e_3}+\hat{Y}_{e_3}+M_{e_5}+\hat{Y}_{e_5}$ \\ \hline
$R_{e_{14}}$ & $\hat{Y}_{e_3},\hat{Y}_{e_5}$ & $\hat{Y}_{e_{14}}$ & $M_{e_{14}}+M_{e_3}+\hat{Y}_{e_3}+M_{e_5}+\hat{Y}_{e_5}$ \\ \hline
$R_{e_{15}}$ & $\hat{Y}_{e_3},\hat{Y}_{e_5}$ & $\hat{Y}_{e_{15}}$ & $M_{e_{15}}+M_{e_3}+\hat{Y}_{e_3}+M_{e_5}+\hat{Y}_{e_5}$ \\ \hline
$R_{e_{16}}$ & $\hat{Y}_{e_6}$ & $\hat{Y}_{e_{16}}$ & $M_{e_{16}}+M_{e_6}+\hat{Y}_{e_6}$ \\ \hline
$R_{e_{17}}$ & $\hat{Y}_{e_6}$ & $\hat{Y}_{e_{17}}$ & $M_{e_{17}}+M_{e_6}+\hat{Y}_{e_6}$ \\ \hline
$R_{e_{18}}$ & $\hat{Y}_{e_6}$ & $\hat{Y}_{e_{18}}$ & $M_{e_{18}}+M_{e_6}+\hat{Y}_{e_6}$ \\ \hline
$R_{t_1}$ & $\hat{Y}_{e_7},\hat{Y}_{e_{10}}$ & $\hat{X}_1+\hat{X}_2+\hat{X}_3+\hat{X}_4$ & $M_{e_7}+\hat{Y}_{e_7}+M_{e_{10}}+\hat{Y}_{e_{10}}$ \\ \hline
$R_{t_2}$ & $\hat{Y}_{e_8},\hat{Y}_{e_{13}}$ & $\hat{X}_2+\hat{X}_4$ & $M_{e_8}+\hat{Y}_{e_8}+M_{e_{13}}+\hat{Y}_{e_{13}}$ \\ \hline
$R_{t_3}$ & $\hat{Y}_{e_9},\hat{Y}_{e_{16}}$ & $\hat{X}_1+\hat{X}_3$ & $M_{e_9}+\hat{Y}_{e_9}+M_{e_{16}}+\hat{Y}_{e_{16}}$ \\ \hline
$R_{t_4}$ & $\hat{Y}_{e_{11}},\hat{Y}_{e_{14}}$ & $\hat{X}_1+\hat{X}_3$ & $M_{e_{11}}+\hat{Y}_{e_{11}}+M_{e_{14}}+\hat{Y}_{e_{14}}$ \\ \hline
$R_{t_5}$ & $\hat{Y}_{e_{12}},\hat{Y}_{e_{17}}$ & $\hat{X}_2+\hat{X}_4$ & $M_{e_{12}}+\hat{Y}_{e_{12}}+M_{e_{17}}+\hat{Y}_{e_{17}}$ \\ \hline
$R_{t_6}$ & $\hat{Y}_{e_{15}},\hat{Y}_{e_{18}}$ & $\hat{X}_1+\hat{X}_2+\hat{X}_3+\hat{X}_4$ & $M_{e_{15}}+\hat{Y}_{e_{15}}+M_{e_{18}}+\hat{Y}_{e_{18}}$ \\ \hline
\end{tabular}
\vspace{5pt} \\
\begin{tabular}{|c|c|c|c|}
\hline
\textit{Client} &\textit{Has-set} &\textit{Want-set}& \textit{Decoding Map} \\ \hline \hline
 & $\hat{X}_1,\hat{X}_2,$ & $\hat{Y}_1,\ldots,\hat{Y}_{6}$& $(M_{e_1}+\hat{X}_1, M_{e_2}+\hat{X}_2, M_{e_3}+\hat{X}_1+\hat{X}_2, M_{e_4}+\hat{X}_3+\hat{X}_4, M_{e_5}+\hat{X}_4, M_{e_6}+\hat{X}_3,$ \\ \cline{3-4}
$R_{\text{all}}$ & $\hat{X}_3,\hat{X}_4$ & $\hat{Y}_7,\ldots,\hat{Y}_{12}$& $M_{e_7}+\hat{X}_1, M_{e_8}+\hat{X}_1, M_{e_9}+\hat{X}_1, M_{e_{10}}+\hat{X}_2+\hat{X}_3+\hat{X}_4, M_{e_{11}}+\hat{X}_2+\hat{X}_3+\hat{X}_4, M_{e_{12}}+\hat{X}_2+\hat{X}_3+\hat{X}_4,$ \\ \cline{3-4}
 &  & $\hat{Y}_{13},\ldots,\hat{Y}_{18}$& $M_{e_{13}}+\hat{X}_1+\hat{X}_2+\hat{X}_4, M_{e_{14}}+\hat{X}_1+\hat{X}_2+\hat{X}_4, M_{e_{15}}+\hat{X}_1+\hat{X}_2+\hat{X}_4, M_{e_{16}}+\hat{X}_3, M_{e_{17}}+\hat{X}_3, M_{e_{18}}+\hat{X}_3)$ \\ \hline
\end{tabular}
\end{table*}
\begin{example}
Consider the network computation problem given in Fig.~\ref{fig_fnc2}. Each message takes value from the binary field and each edge has a capacity of $1$ bit. The corresponding functional index coding problem obtained using Definition~\ref{nc_2_fic} is given by the first three columns of Table~\ref{tab_decmap2}. The message vector is $(\hat{X},\hat{Y})=(\hat{X}_1,\ldots,\hat{X}_4,\hat{Y}_{e_1},\ldots,\hat{Y}_{e_{18}})$ and each message takes value from the binary field. A functional index code for the same is given in Table~\ref{tab_M_e2}, and the decoding maps of clients obtained using \eqref{eq_dh_E},\eqref{eq_dh_T}, and \eqref{eq_dh_all} are given in the last column of Table~\ref{tab_decmap2}.

\begin{figure}[h]
\centering
\includegraphics[scale=0.60]{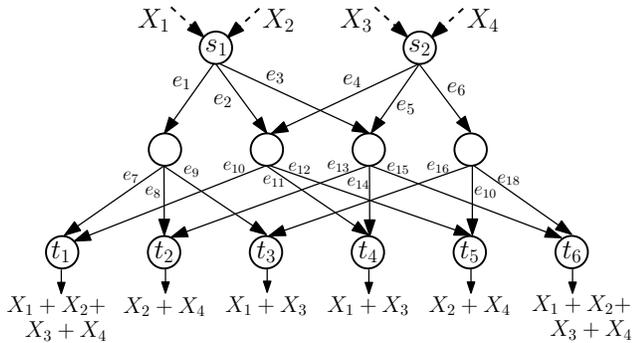}
\caption{A function computation problem.}
\label{fig_fnc2}
\vspace{-10pt}
\end{figure}

\begin{table}[h]
\caption{}
\label{tab_M_e2}
\scriptsize
\centering
\renewcommand{\tabcolsep}{4pt}
\renewcommand{\arraystretch}{1.4}
\begin{tabular}{|c|c|}\hline
$M_e(\hat{X},\hat{Y})$ & $\hat{Y}_e+F_e(\hat{X})$ \\ \hline \hline
$M_{e_1}$ & $\hat{Y}_{e_1}+\hat{X}_1$ \\ \hline 
$M_{e_2}$ & $\hat{Y}_{e_2}+\hat{X}_2$ \\ \hline 
$M_{e_3}$ & $\hat{Y}_{e_3}+\hat{X}_1+\hat{X}_2$ \\ \hline 
$M_{e_4}$ & $\hat{Y}_{e_4}+\hat{X}_3+\hat{X}_4$ \\ \hline 
$M_{e_5}$ & $\hat{Y}_{e_5}+\hat{X}_4$ \\ \hline 
$M_{e_6}$ & $\hat{Y}_{e_6}+\hat{X}_3$ \\ \hline 
$M_{e_7}$ & $\hat{Y}_{e_7}+\hat{X}_1$ \\ \hline 
$M_{e_8}$ & $\hat{Y}_{e_8}+\hat{X}_1$ \\ \hline 
$M_{e_9}$ & $\hat{Y}_{e_9}+\hat{X}_1$ \\ \hline
\end{tabular} 
\quad
\begin{tabular}{|c|c|}\hline
$M_e(\hat{X},\hat{Y})$ & $\hat{Y}_e+F_e(\hat{X})$ \\ \hline \hline
$M_{e_{10}}$ & $\hat{Y}_{e_{10}}+\hat{X}_2+\hat{X}_3+\hat{X}_4$ \\ \hline 
$M_{e_{11}}$ & $\hat{Y}_{e_{11}}+\hat{X}_2+\hat{X}_3+\hat{X}_4$ \\ \hline 
$M_{e_{12}}$ & $\hat{Y}_{e_{12}}+\hat{X}_2+\hat{X}_3+\hat{X}_4$ \\ \hline 
$M_{e_{13}}$ & $\hat{Y}_{e_{13}}+\hat{X}_1+\hat{X}_2+\hat{X}_4$ \\ \hline 
$M_{e_{14}}$ & $\hat{Y}_{e_{14}}+\hat{X}_1+\hat{X}_2+\hat{X}_4$ \\ \hline 
$M_{e_{15}}$ & $\hat{Y}_{e_{15}}+\hat{X}_1+\hat{X}_2+\hat{X}_4$ \\ \hline 
$M_{e_{16}}$ & $\hat{Y}_{e_{16}}+\hat{X}_3$ \\ \hline 
$M_{e_{17}}$ & $\hat{Y}_{e_{17}}+\hat{X}_3$ \\ \hline 
$M_{e_{18}}$ & $\hat{Y}_{e_{18}}+\hat{X}_3$ \\ \hline 
\end{tabular}
\end{table}

The local encoding kernels obtained using \eqref{eq_dhre2fe} (by setting $M_e=0$ for all $e\in E$ and replacing $\hat{X}_k$ with $X_k$ and $\hat{Y}_e$ with $Y_e$ in the last column of Table~\ref{tab_decmap2}) and the resulting global encoding kernels (by \eqref{eq_f_2_F}) are given in Table~\ref{tab_ic2nc}.
\begin{table}[h]
\caption{}
\label{tab_ic2nc}
\scriptsize
\centering
\renewcommand{\tabcolsep}{4pt}
\renewcommand{\arraystretch}{1.4}
\begin{tabular}{|c|c|c|}\hline
$Y_{e}$ & $f_e$ & $F_e$ \\ \hline \hline
$Y_{e_1}$ & $X_1$ & $X_1$ \\ \hline
$Y_{e_2}$ & $X_2$ & $X_2$ \\ \hline
$Y_{e_3}$ & $X_1+X_2$ & $X_1+X_2$ \\ \hline
$Y_{e_4}$ & $X_3+X_4$ & $X_3+X_4$ \\ \hline
$Y_{e_5}$ & $X_4$ & $X_4$ \\ \hline
$Y_{e_6}$ & $X_3$ & $X_3$ \\ \hline
$Y_{e_7}$ & $Y_{e_1}$ & $X_1$ \\ \hline
$Y_{e_8}$ & $Y_{e_1}$ & $X_1$ \\ \hline
$Y_{e_9}$ & $Y_{e_1}$ & $X_1$ \\ \hline
\end{tabular}
\qquad
\begin{tabular}{|c|c|c|}\hline
$Y_e$ & $f_e$ & $F_e$ \\ \hline \hline
$Y_{e_{10}}$ & $Y_{e_2}+Y_{e_4}$ & $X_2+X_3+X_4$ \\ \hline
$Y_{e_{11}}$ & $Y_{e_2}+Y_{e_4}$ & $X_2+X_3+X_4$ \\ \hline
$Y_{e_{12}}$ & $Y_{e_2}+Y_{e_4}$ & $X_2+X_3+X_4$ \\ \hline
$Y_{e_{13}}$ & $Y_{e_3}+Y_{e_5}$ & $X_1+X_2+X_4$ \\ \hline
$Y_{e_{14}}$ & $Y_{e_3}+Y_{e_5}$ & $X_1+X_2+X_4$ \\ \hline
$Y_{e_{15}}$ & $Y_{e_3}+Y_{e_5}$ & $X_1+X_2+X_4$ \\ \hline
$Y_{e_{16}}$ & $Y_{e_6}$ & $X_3$ \\ \hline
$Y_{e_{17}}$ & $Y_{e_6}$ & $X_3$ \\ \hline
$Y_{e_{18}}$ & $Y_{e_6}$ & $X_3$ \\ \hline
\end{tabular}
\end{table}

In Table~\ref{tab_ic2dec} below, the decoding maps for sinks $t_1,\ldots,t_6$ obtained using \eqref{eq_dhrt2dt} (by setting $M_e=0$ for all $e\in E$ and replacing $\hat{Y}_e$ with $Y_e$ in the last column of Table~\ref{tab_decmap2}) are verified to output desired function at each sink by substituting values of $Y_e$, $e\in E$, from the last column (global encoding kernels) of Table~\ref{tab_ic2nc}.
\begin{table}[h]
\caption{}
\label{tab_ic2dec}
\scriptsize
\centering
\renewcommand{\tabcolsep}{4pt}
\renewcommand{\arraystretch}{1.4}
\begin{tabular}{|c|c|}\hline
\textit{Sink} & \textit{Decoding Map} \\ \hline \hline
$t_1$ & $Y_{e_7}+Y_{e_{10}}=X_1+(X_2+X_3+X_4)$ \\ \hline
$t_2$ & $Y_{e_8}+Y_{e_{13}}=X_1+(X_1+X_2+X_4)=X_2+X_4$ \\ \hline
$t_3$ & $Y_{e_9}+Y_{e_{16}}=X_1+X_3$ \\ \hline
$t_4$ & $Y_{e_{11}}+Y_{e_{14}}=(X_2+X_3+X_4)+(X_1+X_2+X_4)=X_1+X_3$ \\ \hline
$t_5$ & $Y_{e_{12}}+Y_{e_{17}}=(X_2+X_3+X_4)+X_3=X_2+X_4$ \\ \hline
$t_6$ & $Y_{e_{15}}+Y_{e_{18}}=(X_1+X_2+X_4)+X_3$ \\ \hline
\end{tabular}
\end{table}

Thus, Table~\ref{tab_ic2nc} is a network code for network computation problem of Fig.~\ref{fig_fnc2}. \hfill $\square$
\end{example}

\section{From Functional Index Coding Problem to Network Computation Problem}\label{sec_fic2fnc}
In the previous section, we showed that any network computation problem can be converted into a functional index coding problem wherein only the \textit{Want-sets} of some of the receivers included functions of source messages. In this section, we obtain a network computation problem from a given functional index coding problem with only \textit{Want-sets} containing functions of messages and show that solution to one problem can be converted into a solution for the other.

\begin{definition}\label{fic_2_fnc}
Let $\mathcal{I}(Z,\mathcal{R})$ be a functional index coding problem with $K$ messages $Z_1,Z_2,\ldots,Z_K$, where, for every $k\in [K]$, $Z_k$ is uniformly distributed over $\mathbb{F}_q^{n_k}$ for some positive integer $n_k$ and $\mathcal{R}=\{R_1,R_2,\ldots,R_M\}$ is the set of $M$ clients. For any client $R_i$, its \textit{Has-set} $H_i$ is a subset of the messages set $\{Z_1,Z_2,\ldots,Z_K\}$ and its \textit{Want-set} $W_i$ includes some functions of the messages, $W_i=\{w_{i,1}(Z),w_{i,2}(Z),\ldots ,w_{i,|W_i|}(Z)\}$, where $N_K=n_1+\ldots+n_K$ and $w_{i,l}:\mathbb{F}_q^{N_K} \rightarrow \mathbb{F}_q$ for $1\leqslant l\leqslant |W_i|$. That is, only the \textit{Want-sets} of the clients include functions of messages. Let $\ell$ be a positive integer. A network computation problem $\mathcal{F_I}(\mathcal{N}(S\cup T\cup B,\tilde{E}\cup E),X,\{G_t:t\in T\})$ can be constructed as follows:
\begin{enumerate}
\item The message vector $X=Z$, i.e., $K$ messages $X_1,X_2,\ldots,X_K$ are generated in the network and $X_k=Z_k$ for all $k\in [K]$.
\item The set of vertices is $V=S \cup T \cup B$. 
\begin{itemize}
\item $S=\{s_1,s_2,\ldots,s_K\}$ is the set of $K$ source vertices. Source $s_k$ generates message $X_k$.
\item $T=\{t_1,t_2,\ldots,t_M\}$ is the set of $M$ sink nodes. Sink $t_m$ demands the functions in the set $G_{t_m}=W_m$.
\item $B=\{v_B ,v'_B\}$.
\end{itemize}
\item The set $\tilde{E}$ is the set of tailless source edges, where $\tilde{E}=\{\tilde{e}_1,\tilde{e}_2,\ldots,\tilde{e}_K\}$ ($\tilde{e}_k$ terminates at $s_k$, corresponds to $X_k$, and has capacity $n_k$) and the set of directed network edges is $E=E_1\cup E_2\cup E_3 \cup \{e_B\}$.
\begin{itemize}
\item $E_1=\{(s_k,t_m):Z_k\in H_m,s_k\in S,t_m\in T\}$. The capacity of an edge $(s_k,t_m)\in E_1$ is $n_k$. 
\item $E_2=\{(s_k,v_B):s_k\in S\}$ and the capacity of an edge $(s_k,v_B)$ is $n_k$.
\item $E_3=\{(v'_B,t_m):t_m\in T\}$ and the capacity of an edge $(v'_B,t_m)$ is $\ell$.
\item $e_B=(v_B,v'_B)$ and the capacity of this edge is $\ell$.
\end{itemize}
\end{enumerate}
\end{definition}

\begin{remark} \label{rem_2}
\begin{enumerate}[a)]
\item Since the demands of the sink $t_m$ in $\mathcal{F_I}$ and the client $R_m$ in $\mathcal{I}$ are the same, $G_{t_m}(\mathbf{x})=W_m(\mathbf{x})$ for every $\mathbf{x}\in \mathbb{F}_q^{N_K}$.
\item Node $v_B$ is the only node in the network that has in-degree greater than one and hence the only node that can perform coding operations. The node $v_B$ has access to all the source messages $X_1,X_2,\ldots,X_K$ via the edges in $E_2$ and each sink $t_m\in T$ has access to all the information received by $v'_B$ via the edges in $E_3$. Thus, the node $v_B$ and the coded message it passes on the edges $(v_B,v'_B)$ in the network computation problem correspond, respectively, to the transmitter (that knows all the messages $Z_1,Z_2,\ldots,Z_K$ and performs coding operations) and the functional index code it broadcasts (received by all the clients) in the functional index coding problem. 
\item The node $v'_B$ simply forwards the coded message it receives from $v_B$ to all the sinks $t_m\in T$ via the edges in $E_3$.
\item The edges in $E_1$ represent the \textit{Has-sets} of the clients; there is an edge from $s_k$ to $t_m$ in the network if and only if client $R_m$ knows $Z_k$. Also, $In(t_m)=\{(s_k,t_m):Z_k\in H_m\}\cup \{(v'_B,t_m)\}$ and $|In(t_m)|=|H_m|+1$ for every $t_m\in T$ (or, equivalently $R_m\in\mathcal{R}$). In other words, through an edge $(s_k,t_m)\in E_1$, $t_m$ receives $X_k$, i.e.,
\begin{align} \label{eq_in_t_H}
(X_k)_{k:(s_k,t_m)\in In(t_m)}=H_m(X),
\end{align} 
and through edge $(v'_B,t_m)$ it receives the coded message on edge $e_B$.
\end{enumerate}
\end{remark}

\begin{proposition} \label{prop_2}
For a given functional index coding problem $\mathcal{I}(Z,\mathcal{R})$ (with only \textit{Want-sets} containing functions of messages), let $\mathcal{F_I}(\mathcal{N}(S\cup T\cup B,\tilde{E}\cup E),X,\{G_t:t\in T\})$ be the corresponding network computation problem constructed using Definition~\ref{fic_2_fnc}. Let $\ell$ be a positive integer. Then, a functional index code of length $\ell$ for $\mathcal{I}$ exists if and only if a network code for $\mathcal{F_I}$ exists.
\end{proposition}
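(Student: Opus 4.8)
The plan is to prove the two implications separately, in each case translating a code of one type into a code of the other, exactly in the spirit of Theorem~\ref{thm_1}. The construction in Definition~\ref{fic_2_fnc} is engineered so that the single edge $e_B=(v_B,v'_B)$ of capacity $\ell$ is the unique bottleneck through which every piece of information reaching a sink $t_m$, other than the raw source messages in its own \textit{Has-set}, must travel; this is what makes the correspondence with a length-$\ell$ broadcast possible. \emph{From a functional index code to a network code:} suppose $(\mathcal{M},\{\hat{D}_{R_i}:R_i\in\mathcal{R}\})$ is a functional index code of length $\ell$ for $\mathcal{I}$. I would let every edge of $E_1\cup E_2$ emanating from $s_k$ simply forward $X_k$ (this fits, as such edges have capacity $n_k$); let $v_B$, which by Remark~\ref{rem_2} has access to all of $X_1,\dots,X_K=Z$, transmit $\mathcal{M}(X)\in\mathbb{F}_q^{\ell}$ on $e_B$ (this fits, as $n_{e_B}=\ell$); and let $v'_B$ forward this $\ell$-symbol word unchanged on every edge of $E_3$ (capacity $\ell$). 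For each sink $t_m$, by \eqref{eq_in_t_H} the edges of $E_1$ deliver exactly $H_m(X)$ and the edge $(v'_B,t_m)$ delivers $\mathcal{M}(X)$, so setting $D_{t_m}:=\hat{D}_{R_m}$ (with arguments reordered to match $In(t_m)$) and invoking \eqref{eq_Dhat} yields $D_{t_m}=W_m(X)=G_{t_m}(X)$ for every realization, so \eqref{eq_dec_y} holds. Acyclicity of $\mathcal{N}$ is immediate from its layered structure $S\to v_B\to v'_B\to T$.

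\emph{From a network code to a functional index code:} conversely, suppose $\{f_e:e\in E\}\cup\{D_t:t\in T\}$ is a network code for $\mathcal{F_I}$, with global encoding kernels $\{F_e\}$. Identifying $X$ with $Z$, I would take $\mathcal{M}:=F_{e_B}$, which is a map $\mathbb{F}_q^{N_K}\to\mathbb{F}_q^{n_{e_B}}=\mathbb{F}_q^{\ell}$ and hence a legitimate encoding map of length $\ell$ (its codebook may be a proper subset of $\mathbb{F}_q^{\ell}$, which is allowed). The crucial observation is that $In(v'_B)=\{e_B\}$, so for each sink $t_m$ the symbol $Y_{(v'_B,t_m)}$ on the edge $(v'_B,t_m)$ equals $f_{(v'_B,t_m)}(Y_{e_B})=f_{(v'_B,t_m)}(\mathcal{M}(X))$, a fixed function of the broadcast word, while the remaining incoming symbols of $t_m$ are the raw messages $(X_k)_{Z_k\in H_m}=H_m(X)$ by \eqref{eq_in_t_H}. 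I would therefore define, for each $R_m\in\mathcal{R}$,
\begin{align*}
\hat{D}_{R_m}\bigl(\mathcal{M}(Z),H_m(Z)\bigr):=D_{t_m}\Bigl(\,H_m(Z)\,,\;f_{(v'_B,t_m)}\bigl(\mathcal{M}(Z)\bigr)\Bigr),
\end{align*}
with the arguments ordered as $In(t_m)$ requires. By \eqref{eq_dec_F} the right-hand side equals $G_{t_m}(X)=W_m(X)$, so \eqref{eq_Dhat} is satisfied and $(\mathcal{M},\{\hat{D}_{R_m}\})$ is a functional index code of length $\ell$ for $\mathcal{I}$.

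The step I expect to be the main obstacle is the bookkeeping in the converse: one must argue carefully that \emph{every} input that a sink $t_m$ feeds into its decoder $D_{t_m}$, apart from its \textit{Has-set}, is a deterministic function of the single word $Y_{e_B}=\mathcal{M}(Z)$, and hence computable by each client $R_m$ from the broadcast alone. This is precisely where the bottleneck $v_B,v'_B$ and the fan-out edges $E_3$ of Definition~\ref{fic_2_fnc} earn their keep, playing here the role that the auxiliary receiver $R_{\text{all}}$ and the additive mask \eqref{eq_map} played in Theorem~\ref{thm_1}; once this structural point is secured, verifying decoding correctness in both directions is routine.
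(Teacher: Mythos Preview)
Your proposal is correct and follows essentially the same approach as the paper: in both directions you identify $\mathcal{M}$ with $F_{e_B}$ and $\hat{D}_{R_m}$ with $D_{t_m}$, using Remark~\ref{rem_2} and \eqref{eq_in_t_H} to match the inputs of sink $t_m$ with the client's \textit{Has}-set together with the broadcast word. Your converse is in fact slightly more careful than the paper's, since you explicitly compose with $f_{(v'_B,t_m)}$ rather than invoking Remark~\ref{rem_2}(c) to assume $v'_B$ merely forwards; for full consistency you could apply the same treatment to the edges $(s_k,t_m)\in E_1$, but the paper handles those via \eqref{eq_in_t_H} just as you do.
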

\begin{proof}
Let $N_K=n_1+n_2+\ldots+n_K$. We will show that any network code for $\mathcal{F_I}$ can be converted into a functional index code of length $\ell$ for $\mathcal{I}$ and vice versa.

\textit{Converting a network code into a functional index code:} Suppose there is a network code for $\mathcal{F_I}$ and let $F_{e_B}$ be the global encoding kernel of the edge $e_B$ (since this is the only edge that carries coded message) and $\{D_{t_m}:t_m\in T\}$ be the set of decoding functions of sinks. Then by \eqref{eq_dec_F} and Remark~\ref{rem_2}(d), for every sink $t_m\in T$ we have
\begin{align} \label{eq_a}
D_{t_m}\left(F_{e_B}(X),(X_k)_{k:(s_k,t_m)\in In(t_m)}\right)=G_{t_m}(X).
\end{align}
By \eqref{eq_a} and \eqref{eq_in_t_H}
\begin{align} \nonumber
D_{t_m}&\left(F_{e_B}(X),(X_k)_{k:(s_k,t_m)\in In(t_m)}\right)=\qquad\qquad\qquad\\ 
&\qquad\qquad D_{t_m}(F_{e_B}(X),H_m(X))=G_{t_m(X)}. \label{eq_b}
\end{align}
Define a map $\mathcal{M}:\mathbb{F}_q^{N_K}\rightarrow \mathbb{F}_q^\ell$ by 
\begin{align}\label{eq_map_nc2fic}
\mathcal{M}(Z)=F_{e_B}(Z)
\end{align}
and a map $\hat{D}_{R_m}$ for every client $R_m\in \mathcal{R}$ by
\begin{align}\label{eq_decmap_nc2fic}
\hat{D}_{R_m}(\mathcal{M}(Z),H_m(Z))=D_{t_m}(\mathcal{M}(Z),H_m(Z)).
\end{align}
For any receiver $R_m\in \mathcal{R}$, by \eqref{eq_decmap_nc2fic}, \eqref{eq_map_nc2fic}, \eqref{eq_b}, and Remark~\ref{rem_2}(a), we have 
\begin{align*}
\hat{D}_{R_m}(\mathcal{M}(Z),H_m(Z))&=D_{t_m}(\mathcal{M}(Z),H_m(Z))\\
&=D_{t_m}(F_{e_B}(Z),H_m(Z))=G_{t_m}(Z)\\
&=W_m(Z)
\end{align*}
and hence $\hat{D}_{R_m}$ satisfy \eqref{eq_Dhat}. Thus, \eqref{eq_map_nc2fic} defines a functional index code for $\mathcal{I}(Z,\mathcal{R})$ with decoding maps $\hat{D}_{R_m}$, $R_m\in\mathcal{R}$, as defined in \eqref{eq_decmap_nc2fic}.

\textit{Converting a functional index code into a network code:} Assume that a functional index code of length $\ell$ is given for $\mathcal{I}(Z,\mathcal{R})$, i.e., an encoding map $\mathcal{M}$ and decoding functions $\{\hat{D}_{R_m}:R_m\in \mathcal{R}\}$ are specified. We will construct a network code for the network computation problem $\mathcal{F_I}$, i.e., specify the global encoding kernel of the edge $e_B$ (since this the only edge that carries coded message) and decoding functions $D_{t_m}$ for each sink in $t_m\in T$, using the given functional index code.

By \eqref{eq_Dhat}, for every client $R_m$ we have
\begin{align} \label{eq_c}
\hat{D}_{R_m}(\mathcal{M}(Z),H_m(Z))=W_m(Z).
\end{align}
Define a map $F_{e_B}:\mathbb{F}_q^{N_K}\rightarrow \mathbb{F}_q^\ell$ by
\begin{align} \label{eq_d}
F_{e_B}(X)=\mathcal{M}(X).
\end{align}
For every sink $t_m\in T$, define a map $D_{t_m}$ that takes data received on the incoming edges of the sink $t_m$ as input by
\begin{align} \nonumber
D_{t_m}&\left(F_{e_B}(X),(X_k)_{k:(s_k,t_m)\in In(t_m)}\right)=\qquad\qquad\qquad \\
&\qquad\qquad\hat{D}_{R_m}\left(F_{e_B}(X),(X_k)_{k:(s_k,t_m)\in In(t_m)}\right).
\label{eq_e}
\end{align}
By \eqref{eq_e}, \eqref{eq_in_t_H}, \eqref{eq_c}, and Remark~\ref{rem_2}(a)
\begin{align*}
D_{t_m}&\left(F_{e_B}(X),(X_k)_{k:(s_k,t_m)\in In(t_m)}\right)=\qquad\qquad\qquad \\
&\quad\hat{D}_{R_m}(\mathcal{M}(X),H_m(X))=W_m(X)=G_{t_m}(X).
\end{align*}
Thus, \eqref{eq_d} and \eqref{eq_e} define a network code for $\mathcal{F_I}$.
\end{proof}

\begin{example}
Consider the functional index coding problem given by the first three columns of Table~\ref{tab_fic3}. Each message takes value from a binary field. Majority function is denoted by $Maj$, $Maj(a,b,c)=ab+bc+ac$, $a,b,c\in \{0,1\}$. An optimal functional index code that satisfies all clients is $(C_1,C_2,C_3)=(X_1+X_6,\,X_3+X_4,\,X_2+X_5)$. The decoding function of each client is given in the last columns of Table~\ref{tab_fic3}.
\begin{table}[h]
\caption{} \label{tab_fic3}
\scriptsize
\centering
\renewcommand{\tabcolsep}{1pt}
\renewcommand{\arraystretch}{1.4}
\begin{tabular}{|c|c|c|c|}\hline
\textit{}&\textit{Has-set}&\textit{Want-set}&\textit{Decoding Map} \\ \hline
$R_1$ & $X_2,X_3$ & $Maj(X_1+X_6,X_2+X_3,X_4)$ & $Maj(C_1,X_2+X_3,C_2+X_3)$\\ \hline
$R_2$ & $X_4,X_5$ & $X_1+X_5+X_6,X_3$ & $(C_1+X_5,C_2+X_4)$\\ \hline
$R_3$ & $X_3,X_6$ & $X_2+X_5,X_1+X_3+X_6$ & $C_3,C_1+X_3$\\ \hline
$R_4$ & $X_1$ & $X_6$ & $C_1+X_1$ \\ \hline
\end{tabular}
\end{table}

The network computation problem obtained using Definition~\ref{fic_2_fnc} is given in Fig.~\ref{fig_fic3}. The capacity of each dash-dotted edge is $\ell (\geqslant 0)$.

\begin{figure}[h]
\centering
\includegraphics[scale=0.5]{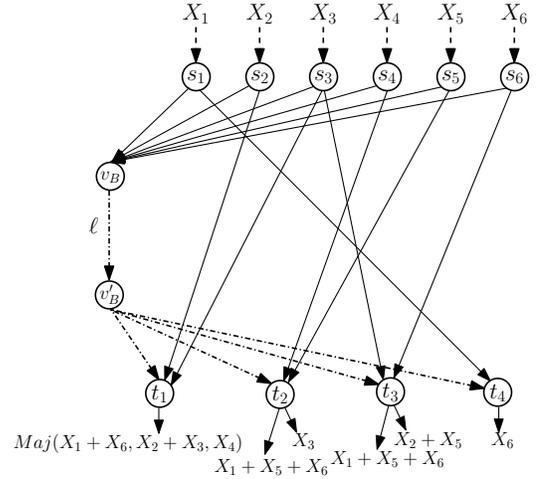}
\caption{Network computation problem corresponding to Table~\ref{tab_fic3}.}
\label{fig_fic3}
\end{figure}

By Proposition~\ref{prop_2}, a code for the network computation problem exists if and only if $\ell\geqslant 3$ since minimum $3$ transmissions are required for the given functional index coding problem. A network code that satisfies all sink demands is obtained by assigning $F_{e_B}(X_1,\ldots,X_6)=(X_1+X_6,\,X_3+X_4,\,X_2+X_5)$ as the global encoding kernel of edge $e_B=(v_B,v'_B)$. Upon receiving this coded message and the source messages on the incoming links, each sink can compute the value of its desired function.
\hfill $\square$
\end{example}

\section{Discussion}\label{sec_disc}
In this paper, we established a relationship between network computation problems and the class of functional index coding problems in which only the demands of the clients include functions of messages. We gave a method to convert any network computation problem into a functional index coding problem with some clients demanding functions of messages and proved that the network computation problem admits a solution if and only if all the clients' demands in the corresponding functional index coding problem can be satisfied in a specific number of transmissions (determined by the original network computation problem). We gave a method to convert a network code into a functional index code and vice versa. This means that algorithms to solve functional index coding problems can be used to obtain network codes for network computation problems. Next, given a functional index coding problem in which some clients demand functions of messages but the side information of each client is a subset of the message set known to the transmitter, we construct a network computation problem and show that a functional index code of a specified length for the former problem exists if and only if a network code for the latter problem exists. 

In this paper, we have considered only the functional index coding problem in which only the demands of the clients include functions and not the side information known to them \textit{a priori}. A direction of future work is to find what kind of networks and problems can be obtained from and converted to the functional index coding problems in which the side information can also include functions of messages known to the transmitter.


\section*{Acknowledgment}
This work was supported partly by the Science and Engineering Research Board (SERB) of Department of Science and Technology (DST), Government of India, through J.C. Bose National Fellowship to B. Sundar Rajan.




\begin{thebibliography}{10} 
\bibitem{Ahls}
R. Ahlswede, N. Cai, S.-Y.~R. Li, and R.~W. Yeung, ``Network Information Flow,''
\textit{IEEE Trans. Inf. Theory}, vol. 46, no. 4, pp. 1204-1216, July 2000.

\bibitem{Yeung}
R.~W. Yeung, \textit{Information Theory and Network Coding}. New York, NY, USA: Springer, 2008.

\bibitem{Alg}
R. K\"{o}etter and M. M\'{e}dard, ``An Algebraic Approach to Network Coding,'' \textit{IEEE/ACM Trans. Netw.}, vol. 11, no. 5, pp. 782-795, October 2003.

\bibitem{RLNC}
T. Ho, M. M\'{e}dard, R. K\"{o}etter, D.~R. Karger, M. Effros, J. Shi, and B. Leong, ``A Random Linear Network Coding Approach to Multicast,'' \textit{IEEE Trans. Inf. Theory}, vol. 52, no. 10, pp. 4413-4430, October 2006.

\bibitem{Ho}
T. Ho and D.~S. Lun, \textit{Network Coding An Introduction}. New York, NY, USA: Cambridge University press, 2008.

\bibitem{Fro}
C. Fragouli and E. Soljanin, ``Network Coding Applications,'' \textit{Found. Trends Netw.}, vol. 2, no. 2, pp. 135-269, 2007.

\bibitem{Insuff}
R. Dougherty, C. Freiling, and K. Zeger, ``Insufficiency of Linear Coding in Network Information Flow,'' \textit{IEEE Trans. Inf. Theory}, vol. 51, no. 8, pp. 2745-2759, August 2005. 

\bibitem{Appu}
R. Appuswamy, M. Franceschetti, N. Karamchandani, and K. Zeger, ``Network Coding for Computing: Cut-set Bounds,'' \textit{IEEE Trans. Inf. Theory}, vol. 57, no. 2, pp. 1015-1030, February 2011.

\bibitem{GiriKr}
A. Giridhar and P.~R. Kumar, ``Computing and Communicating Functions Over Sensor Networks,'' \textit{IEEE J. Sel. Areas Commun.}, vol. 23, no. 4, pp. 755-764, April 2005.

\bibitem{Dey}
B.~K. Rai and B.~K. Dey, ``On Network Coding for Sum-Networks,'' \textit{IEEE Trans. Inf. Theory}, vol. 58, no. 1, pp. 50-63, January 2012.

\bibitem{Dey2}
V. Shah, B.~K. Dey, and D. Manjunath, ``Network Flows for Function Computation,'' \textit{IEEE J. Sel. Areas Commun.}, vol. 31, no. 4, pp. 714-730, April 2013.

\bibitem{Appu2}
R. Appuswamy, M. Franceschetti, N. Karamchandani, and K. Zeger, ``Linear Codes, Target Function Classes, and Network Computing Capacity,'' \textit{IEEE Trans. Inf. Theory}, vol. 59, no. 9, pp. 5741-5753, September 2013.

\bibitem{BK}
Y. Birk and T. Kol, ``Informed-Source Coding-on-Demand (ISCOD) over Broadcast Channels,'' in \textit{Proc. IEEE INFOCOM}, San Francisco, CA, USA, 1998, vol. 3, pp. 1257-1264.

\bibitem{BBJK}
Z. Bar-Yossef, Y. Birk, T.~S. Jayram, and T. Kol, ``Index Coding with Side Information,'' \textit{IEEE Trans. Inf. Theory}, vol. 57, no. 3, pp. 1479-1494, March 2011.

\bibitem{BCSI}
N. Alon, A. Hassidim, E. Lubetzky, U. Stav, and A. Weinstein, ``Broadcasting with Side Information,'' in \textit{Proc. IEEE Symp. Found. Comput. Sci}, Philadelphia, PA, USA, 2008, pp. 823-832.

\bibitem{LubStav}
E. Lubetzky and E. Stav, ``Nonlinear Index Coding Outperforming the Linear Optimal,'' \textit{IEEE Trans. Inf. Theory}, vol. 55, no. 8, pp. 3544-3551, August 2009.

\bibitem{OngHo}
L. Ong and C.~K. Ho, ``Optimal Index Codes for a Class of Multicast Networks With Receiver Side Information,'' in \textit{Proc. IEEE Int. Conf. Commun.}, Ottawa, ON, Canada, 2012, pp. 2213-2218.

%
%
%
%

\bibitem{BCCSI}
K.~W. Shum, M. Dai, and C.~W. Sung, ``Broadcasting with Coded Side Information,'' in \textit{Proc. IEEE Int. Symp. Personal Indoor and Mobile Radio Commun.}, Sydney, NSW, Australia, 2012, pp. 89-94.

\bibitem{ICCSI}
N. Lee, A.~G. Dimakis, and R.~W. Heath, ``Index Coding With Coded Side-Information,'' \textit{IEEE Commun. Lett.}, vol. 19, no. 3, pp. 319-322, March 2015.

\bibitem{FICP}
A. Gupta and B.~S. Rajan, ``Error-Correcting Functional Index Codes, Generalized Exclusive Laws and Graph Coloring,'' 16 October 2015. [Online]. Available: http://arxiv.org/abs/1510.04820. (Part of this work has been accepted for presentation at \textit{IEEE Int. Conf. Commun.}, Kuala Lumpur, Malaysia, 2016.)

\bibitem{ICRel}
S. El~Rouayheb, A. Sprintson, and C. Georghiades, ``On the Index Coding Problem and its Relation to Network Coding and Matroid Theory,'' \textit{IEEE Trans. Inf. Theory}, vol. 56, no. 7, pp. 3187-3195, July 2010.

\bibitem{ICRel2}
M. Effros, S. El~Rouayheb, and M. Langberg, ``An Equivalence Between Network Coding and Index Coding,'' \textit{IEEE Trans. Inf. Theory}, vol. 61, no. 5, pp. 2478-2487, May 2015.
\end{thebibliography}
\end{document}